\newcommand{\hidden}[1]{}
\newcommand{\irule}[2]{\dfrac{#1}{#2}}
\declaretheorem[name=Notation,numberlike=definition]{notation}
\declaretheorem[name=Theorem,numberlike=theorem]{mythm}
\declaretheorem[name=Lemma,numberlike=lemma]{mylem}
\declaretheorem[name=Theorem,numberwithin=section]{appthm}
\declaretheorem[name=Lemma,numberwithin=section]{applem}
\declaretheorem[name=Definition,numberwithin=section]{appdef}
\declaretheorem[name=Theorem,numberwithin=section]{theorem}
\declaretheorem[name=Definition,numberlike=theorem]{definition}
\declaretheorem[name=Example,numberlike=theorem]{example}
\declaretheorem[name=Theorem,numberwithin=section]{appthm}
\declaretheorem[name=Lemma,numberwithin=section]{applem}
\declaretheorem[name=Definition,numberwithin=section]{appdef}
\newcommand{\powset}[1]{\wp(#1)}
\newcommand{\subseteqfin}{\subseteq_{\it fin}}
\newcommand{\seq}[1]{\langle {#1} \rangle}
\renewcommand{\epsilon}{\varepsilon}
\newcommand{\mkset}[1]{\overline{#1}}
\newcommand{\mkcredit}[2][]{\Gamma_{#1}({#2})}
\newcommand{\eqdef}{\triangleq}
\newcommand{\setenum}[1]{\{#1\}}
\newcommand{\setcomp}[2]{\{{#1} \mid {#2}\}}
\newcommand{\ES}{\textrm{\textup{ES}}}
\newcommand{\CES}{\textrm{\textup{CES}}}
\newcommand{\esname}{\mathcal{E}}
\newcommand{\CF}[1]{{\it CF}{({#1})}}
\newcommand{\Con}{{\it Con}}
\newcommand{\cname}{\mathcal{C}}
\newcommand{\aname}{\mathcal{A}}
\newcommand{\princsym}{\pi}
\newcommand{\princ}[2][]{\princsym_{#1}({#2})}
\newcommand{\invprinc}[2][]{\princsym_{#1}^{-1}({#2})}
\newcommand{\winsym}{\mathcal{W}}
\newcommand{\win}[1][]{\winsym_{#1}}
\newcommand{\payoffsym}[1][]{\Phi_{#1}}
\newcommand{\payoff}[1][]{\payoffsym[#1]}
\newcommand{\reach}[1][]{\mathcal{R}_{{#1}}}
\newcommand{\preach}[2][]{\reach[{#1}]^{{#2}}}
\newcommand{\enc}[2][]{[{#2}]_{#1}}
\newcommand{\amod}[2][]{{#1}{#2}}
\newcommand{\unmod}[2][]{{#2}^{#1}}
\newcommand{\urgent}[1][]{\mathcal{U}_{{#1}}}
\newcommand{\curgent}[2][]{\urgent[{#1}]^{#2}}
\newcommand{\prudent}[1][]{\mathcal{P}_{{#1}}}
\newcommand{\cprudent}[2][]{\prudent[{#1}]^{#2}}
\newcommand{\primes}[1]{\mathit{atoms}({#1})}
\newcommand{\Real}[1]{\mathrm{Real}}
\newcommand{\pmv}[1]{\ensuremath{\mathsf{#1}}}
\newcommand{\atom}[1]{\textit{#1}}
\newcommand{\atoms}{\mathit{E}}
\newcommand{\nrule}[1]{{\scriptsize\textsc{\scriptsize #1}}}
\newcommand{\sem}[1]{\mbox{\ensuremath{\llbracket{#1}\rrbracket}}}
\newcommand{\imp}{\rightarrow}
\newcommand{\coimp}{\twoheadrightarrow}
\newcommand{\pcl}{\textup{PCL}}
\newcommand*{\vcenteredhbox}[1]{\begingroup%
\setbox0=\hbox{#1}\parbox{\wd0}{\box0}\endgroup}
\newcommand{\mytitle}{Contract agreements via logic}
\title{\mytitle}
\author{
Massimo Bartoletti \and Tiziana Cimoli \and Paolo Di Giamberardino 
\\[4pt] 
{\small Universit\`a degli Studi di Cagliari, Italy} 
\\[4pt]
{\small\texttt{\{bart,t.cimoli,digiambe\}@unica.it}}
\\[16pt]
Roberto Zunino  \\[4pt] 
{\small Universit\`a di Trento and COSBI, Italy} 
\\[4pt]
{\small\texttt{roberto.zunino@unitn.it}}
}
\institute{}
\author{Massimo Bartoletti \qquad Tiziana Cimoli \qquad Paolo Di Giamberardino
\institute{Dipartimento di Matematica e Informatica, Universit\`a degli Studi di Cagliari, Italy 
}
\and
Roberto Zunino
\institute{Dipartimento di Matematica, Universit\`a degli Studi di Trento and COSBI, Italy
}
}
\begin{document}

\maketitle

\begin{abstract}
We relate two contract models: 
one based on event structures and game theory, 
and the other one based on logic.
In particular, we show that the notions of agreement and  winning strategies
in the game-theoretic model are related to that of provability 
in the logical model.
\end{abstract}

\section{Introduction} \label{sect:introduction}

Contracts are gaining an increasing relevance in the design 
and implementation of concurrent and distributed systems.
This is witnessed by the proliferation of 
proposals of models and standards for contracts 
appeared in the literature in the last few years.
For instance, \emph{choreography languages} like
WS-CDL~\cite{wscdl}, BPEL4Chor~\cite{bpel4chor} and Scribble~\cite{scribble}
can be used to specify the overall interaction protocol
of a set of Web services.
By projecting a choreography on each of the participants,
we obtain the specification of the behaviour expected from
each single service involved in the application.
These projections can be interpreted as contracts: 
if the actual implementation of each Web service respects its contract,
then the overall application is guaranteed to behave correctly;
otherwise, the service violating its contract may be responsible
(and punishable) for the global failure.
On a more theoretical side, formal models for contracts have been devised
by adapting and extending models of concurrent systems, 
such as Petri nets~\cite{Aalst10cj}, 
event structures~\cite{Hildebrandt10places,BCPZ12places}, 
process algebras~\cite{bhty10,Bravetti08tgc,Bravetti07fsen,Castagna09toplas,Honda08popl},
timed automata~\cite{Lomuscio11fi,Raimondi08fse},
and by extending various logics, 
such as modal~\cite{Abadi93access},
intuitionistic~\cite{Abadi93logical,BZ10lics}, linear~\cite{Abadi93logical},
and deontic~\cite{Prisacariu11jlap,Gelati04normative} logics
(just to cite a few recent approaches).

A main motivation for using contracts
resides in the fact that large distributed applications 
are often constructed by dynamically discovering and composing 
services published by different organizations.
The larger an application is, the greater the probability
that some of its components deviates from the expected behaviour
(either because of unintentional bugs, or 
maliciousness of a competing organization).
Hence, it becomes crucial to protect oneself from other participants'
misbehaviour.
Standard enforcement mechanisms do not apply,
because of the total lack of control on code run by 
mutually untrusted, distributed participants.
Instead, contracts may offer protection by legally binding
the participants in a service composition to either behave as
prescribed, or otherwise be blamed for a contract breach~\cite{Armbrust10cacm}.

In this methodology,
contracts are the pillars which support the reliability of
distributed applications, 
hence the choice of the actual contract model to be used is critical.
However, the ecosystem of contract models proposed in the literature 
is wide and heterogeneous, and the actual properties 
and the relations among different models are not clearly established.
In particular, there is a gap between the two main paradigms
for modelling contracts, i.e.\ the one which interprets them 
as interactive multi-agent systems, and the one where contracts
are rendered as formulae of suitable logics.
To contribute towards reducing this gap, 
in this paper we consider two recent models for contracts ---
one based on game-theoretic notions and the other one on logic ---
and we formally relate them.
More precisely, we show that a correspondence exists between 
the fundamental notions in the first model 
(namely, \emph{agreements} and \emph{winning strategies}) and 
provability in the logic-based model.

In the first model~\cite{BCZ13post},
the behaviour of a set of interacting participants is specified 
as a concurrent multi-player game.
The plays of the game are traces of an event structure (\ES)
which models the causal relations among the actions of the participants. 
Intuitively, an enabling $X \vdash e$ in an \ES\
models the fact that the action $e$
becomes an \emph{obligation} after all the actions in $X$ have been performed.
A participant {\pmv A} wins in a play when $(i)$ her \emph{payoff} 
(defined by a given function~$\payoffsym$) is positive in that play,
or $(ii)$ some participant (but not {\pmv A})
can be blamed for a contract violation.
Indeed, if some ${\pmv B} \neq {\pmv A}$ has violated his contract, 
an external judge may eventually provide {\pmv A} with the 
prescribed compensation
(and {\pmv B} with the respective punishment).

Two key notions in this model are that of \emph{agreement} and \emph{protection}.
Intuitively, given a set of contracts,
the agreement property guarantees that each involved participant  
has a \emph{winning} strategy.
Instead, protection is the property of a single contract $\cname$ of {\pmv A}
ensuring that, whenever $\cname$ is composed with any other contract
(possibly that of an adversary),
{\pmv A} has a \emph{non-losing} strategy.
In~\cite{BCZ13post} it is shown that agreement and protection
cannot coexist in a broad class of contracts
where the obligations are modelled as Winskel's \ES~\cite{Winskel86}.
Roughly, to be protected one should wait until the conditions $X$
in some enabling $X \vdash e$
are satisfied before doing the event $e$.
If all participants adhere to this principle, agreement is not possible.
To reconcile agreements with protection, an extension of Winskel's \ES\
has been proposed, which allows for decoupling a conditional promise 
(\text{e.g.}, doing $e$ in change of $X$)
from the temporal order in which events are performed.
In an \emph{\ES\ with circular causality} (\CES\ for short),
an enabling $b \Vdash a$ means that 
``{\pmv A} will do {\atom a} if {\pmv B} \emph{promises} to do {\atom b}''.
This contract protects {\pmv A}, and
when composed with the contract $a \Vdash b$ of {\pmv B},
it admits an agreement.
More in general, in~\cite{BCZ13post} a technique is proposed 
which, given the participants payoffs, synthesises a set of contracts
which guarantee both agreement and protection.

The second model we consider is an extension of
intuitionistic propositional logic (IPC), 
called Propositional Contract Logic (\pcl~\cite{BZ10lics}).
\pcl\ features a ``contractual'' form of implication, 
denoted by~$\coimp$.
The intuition is that a formula $p \coimp q$ entails $q$ 
not only when $p$ is provable, like standard intuitionistic implication, 
but also in the case that a ``compatible'' formula is assumed.
This compatible formula can take different forms, but the archetypal
example is the (somewhat dual) $q \coimp p$.
While $(p \imp q) \land (q \imp p) \imp p \land q$
is \emph{not} a theorem of IPC,
$(p \coimp q) \land (q \coimp p) \imp p \land q$
is a theorem of \pcl.
The logic \pcl\ is decidable~\cite{BZ10lics}.

A first observation about these two models is that
they both allow for a form of ``circular'' assume-guarantee reasoning.
Consider, for example, a participant 
{\pmv A} which promises to do $a$ provided that she receives $b$ in exchange, 
and a participant {\pmv B} which, dually, 
promises to do $b$ in exchange of $a$.
In the game-theoretic model, these obligations are represented by 
a \CES\ with enablings $b \Vdash a$ and $a \Vdash b$.
Given the intended payoff functions, this contract admits an agreement.
The winning strategies of {\pmv A} and {\pmv B} 
prescribe both participants to do their events 
(without waiting for the other to take the first step), 
so leading
to a \emph{configuration} $\setenum{a,b}$ of the \CES.
In the logical model, the scenario above is represented by the
\pcl\ formula $(b \coimp a) \land (a \coimp b)$.
As noted above, this formula entails both $a$ and~$b$ in the proof system
of \pcl.
Hence, a connection seems to exist between the agreement property
in the game-theoretic model and provability in \pcl.

A main contribution of this paper is to formalise this connection.
More precisely, Theorem~\ref{th:ces-pcl:agreement} shows
that agreement in conflict-free contracts
corresponds to provability in Horn \pcl\ theories.
This correspondence has an important consequence,
since it provides us with a polynomial algorithm 
for provability in Horn \pcl\
(in contrast with the fact that provability in \emph{full} \pcl\ 
is PSPACE-complete, as well as in IPC and 
in its implicational fragment~\cite{Statman79pspace}).
We illustrate this point with the help of some examples
(Ex.~\ref{ex:double-diamond} and~\ref{ex:shy-dancers})
where we show that apparently hard questions in \pcl\ admit an easy
answer when passing to the realm of contracts.

We deepen the above-mentioned correspondence by relating
winning strategies for the game-theoretic contracts with
proofs in \pcl.
The idea is that a proof in the logic
induces an ordering among the atoms.
For instance, to use the elimination rule of $\imp$
in a proof of $\Delta, a \imp b \vdash b$, 
one must first construct a proof of $a$
(similarly to the ordering imposed by an enabling $a \vdash b$),
whereas in a proof of $\Delta, a \coimp b \vdash b$ the proofs of $a$ and $b$
can be interleaved 
(i.e.\ $a$ can be proved after $b$, similarly to 
the fact that $a \Vdash b$ allows $a$ to be done after $b$).
We introduce in Section~\ref{proof-traces} the notion of \emph{proof traces},
that represent
the sequences of atoms respecting the order imposed by proofs in \pcl.
Theorem~\ref{th:ces-pcl:prudence-prooftrace} states that 
proof traces correspond, in the contracts realm, 
to the plays where all participants are innocent.
Since these plays can be constructed with a polynomial algorithm,
this result is significant, because it allows for 
performing a non-trivial task in Horn \pcl\
(i.e., constructing proof traces),
through an easier one in contracts.
Finally, Theorem~\ref{th:ces-pcl:winning-strategy} establishes that,
whenever a contract admits an agreement,
proof traces can be projected to winning strategies for all participants.

\iftoggle{proofs}{%
}{
\smallskip
Because of space constraints, the proofs of our results  
are available in~\cite{ces-pcl-long}.
}

\section{Background}
  \subsection{Contracts}

We briefly review the theory of contracts
introduced in~\cite{BCZ13post}. 
A contract is a concurrent system
featuring \emph{obligations} (what I must do in a given state) 
and \emph{objectives} (what I wish to obtain in a given state).

Obligations are modelled as event structures with circular causality (\CES). 
A comprehensive account of \CES\ is in~\cite{BCPZ13fi};
here we shall only recall the needed definitions.
Assume a denumerable universe of atomic actions $a, b, e, \ldots \in E$, 
called \emph{events}, uniquely associated to 
\emph{participants} ${\pmv A}, {\pmv B}, \ldots \in \aname$
by a function $\princsym: E \rightarrow \aname$.
We denote with $\# \subseteq E \times E$ a \emph{conflict} relation
between events, namely if $a \# b$ then $a$ and $b$ cannot occur
in the same computation.
For a set $X \subseteq E$, the predicate $\CF{X}$ is true iff $X$ is 
\emph{conflict-free}, \text{i.e.} $\forall e,e' \in X: \neg (e \# e')$.
We denote with \Con~ the set $\setcomp{X \subseteqfin E}{CF(X)}$.

\begin{definition}[\CES]
A \CES\  $\esname$ is a triple $\seq{\#, \vdash, \Vdash}$, where
\begin{itemize}

\item $\# \;\subseteq E \times E $ is an irreflexive and symmetric 
\emph{conflict} relation;

\item $\vdash \;\subseteq \Con \; \times \; E$
is the \emph{enabling} relation;

\item $\Vdash \;\subseteq \Con \; \times \; E$ 
is the \emph{circular enabling} relation.

\end{itemize}
The relations $\vdash$ and $\Vdash$ are \emph{saturated}, i.e.  
\(
  \forall X \subseteq Y \subseteqfin E.\;\;
  X \circ e  \,\land\, CF(Y) 
  \implies 
  Y \circ e
\),
for $\circ \in \setenum{\vdash,\Vdash}$.
\end{definition}

A \CES\ is \emph{finite} when $E$ is finite;
it is \emph{conflict-free} when the relation $\#$ is empty.
We write
$a \vdash b$ for $\setenum{a} \vdash b$,
and $\vdash e$ for $\emptyset \vdash e$
(similar shorthands apply for $\Vdash$).

Intuitively, an enabling $X \vdash e$ models the fact that, 
if all the events in $X$ have happened, 
then $e$ is an obligation for participant $\princ{e}$;
such obligation may be discharged only by performing $e$, or 
any event in conflict with $e$.
For instance, an internal choice between $a$ and $b$
is modelled by a \CES\ with enablings $\vdash a$, $\vdash b$ 
and conflict $a \# b$.
After the choice (say, of $a$), the obligation $b$ is discharged.
The case of circular enablings $X \Vdash e$ is more complex: 
$e$ is an obligation if it is a \emph{prudent} event 
(see Def.~\ref{def:ces:prudence}).
Very roughly, $e$ is prudent when one can perform it ``on credit'',
and nevertheless be guaranteed that in all possible executions of the contract,
either the credit will be honoured (by doing the events in $X$), 
or the debtor will be culpable of a contract violation.
For instance, in the contract with enablings
$a \vdash b$ and $b \Vdash a$, 
the first enabling prescribes that $b$ can be done after $a$,
while the second enabling models
the fact that $a$ can be done on credit, 
on the guarantee that the other participant will be obliged to do $b$.
The event $a$ is prudent in the initial state,
because after doing it the other participant has the obligation
to perform~$b$ (not doing $b$ will result in a violation).

Besides the obligations, the other component of a contract is
a function $\payoffsym$ which specifies the objectives of each participant.
More precisely, $\payoffsym$ associates each participant {\pmv A} 
with a set of sequences in $E^{\infty}$ 
(the set of finite or infinite sequences on $E$), 
which represent those executions
where {\pmv A} has a positive payoff.

\begin{definition}[\bf Contract] \label{def:ces:contract}
A contract $\cname$ is a pair $\seq{\esname, \payoffsym}$, 
where $\esname$ is a \CES, and
$\payoffsym: \aname \rightarrow \powset{E^{\infty}}$ 
associates each participant with a set of traces.
\end{definition}

We interpret a contract 
as a nonzero-sum concurrent multi-player game.
The game involves the players in $\aname$ concurrently performing actions
in order to reach their objectives.
A \emph{play} of a contract $\cname$
is a conflict-free sequence $\sigma \in E^{\infty}$  without repetitions.
For $\sigma = \seq{e_0 \, e_1 \cdots} \in E^{\infty}$, 
we write $\mkset{\sigma}$ for the set of events in $\sigma$; 
we write  $\sigma_i$ for the subsequence $\seq{e_0 \cdots e_{i-1}}$.
If $\sigma = \seq{e_0 \cdots e_n}$, 
we write $\sigma \, e$ for $\seq{e_0 \cdots e_n \, e}$.
The empty sequence is denoted by~$\epsilon$.

Each play $\sigma = \seq{e_0 \cdots e_i \cdots}$ uniquely identifies a 
computation in the \CES~$\esname$.
This computation has the form 
\(
  (\emptyset,\emptyset) 
  \xrightarrow{e_0}
  (\mkset{\sigma_1},\mkcredit{\sigma_1}) 
  \cdots
  \xrightarrow{e_{i}}
  (\mkset{\sigma_{i+1}},\mkcredit{\sigma_{i+1}}) 
  \cdots
\).
The first element of each pair is the set of events occurred so far;
the second element is the least set of events done ``on credit'',
i.e.\ performed in the absence of a causal justification.
Formally, for all sequences $\eta = \seq{e_0 \; e_1 \cdots}$,
we define
\(
  \mkcredit{\eta} 
  = 
  \setcomp{e_i \in \mkset{\eta}}{\mkset{\eta_i} \not\vdash e_i \; \land \; \mkset{\eta}\not\Vdash e_i}
\).
Notice that $e \not\in \mkcredit{\eta}$ iff either $e$ is
$\vdash$-enabled by the past events $\mkset{\eta_i}$, or it is
$\Vdash$-enabled by the \emph{whole} play.

\begin{figure}[t]
\centering
\begin{tabular}{ccccc}

\begin{tikzpicture}[scale=1.2]
\draw [->] (0,0)  node [above] {$a$} --(0.95,0) node [above] {$b$};
\draw [->] (-0.5,0)   --(-0.05,0);
\draw [fill=black] (0,0) circle (0.05);
\draw [fill=black] (1,0) circle (0.05);
\node [below] at (0.5, -0.4) {$\esname_{1}$};
\end{tikzpicture}

& \hspace{15pt}

\begin{tikzpicture}[scale=1.2]
\draw [->, rounded corners] (2,0) node [above] {$a\;\;$} -- (2.5,0.4) -- (2.98,0.01) node [above] {$\;\;b$};
\draw [->, rounded corners] (3,0) -- (2.5, -0.4) -- (2.01,-0.01);
\draw [fill=black] (1.95,0) circle (0.05);
\draw [fill=black] (+3.05,0) circle (0.05);
\node [below] at (2.5, -0.4) {$\esname_{2}$};
\end{tikzpicture}

& \hspace{15pt}

\begin{tikzpicture}[scale=1.2]
\draw [->, rounded corners] (2,0) node [above] {$a\;\;$} -- (2.5,0.4) -- (2.98,0.01) node [above] {$\;\;b$};
\draw [->>, rounded corners] (3,0) -- (2.5, -0.4) -- (2.01,-0.01);
\draw [fill=black] (1.95,0) circle (0.05);
\draw [fill=black] (+3.05,0) circle (0.05);
\node [below] at (2.5, -0.4) {$\esname_{3}$};
\end{tikzpicture}

& \hspace{15pt}

\begin{tikzpicture}[scale=1.2]
\draw [->>, rounded corners] (2,0) node [above] {$a\;\;$} -- (2.5,0.4) -- (2.98,0.01) node [above] {$\;\;b$};
\draw [->>, rounded corners] (3,0) -- (2.5, -0.4) -- (2.01,-0.01);
\draw [fill=black] (1.95,0) circle (0.05);
\draw [fill=black] (+3.05,0) circle (0.05);
\node [below] at (2.5, -0.4) {$\esname_{4}$};
\end{tikzpicture}

& \hspace{15pt}

\begin{tikzpicture}[scale=1]
\draw [->] (4,0.5) node [left] {$a$\,} --(5,0) node [right] {$c$};
\draw [->] (4,0.5) --(5,1) node [right] {$b$};
\draw [fill=black] (3.95,0.5) circle (0.05);
\draw [fill=black] (+5.05,0) circle (0.05);
\draw [fill=black] (+5.05,1) circle (0.05);
\draw [decorate,decoration={snake,amplitude=.3mm,segment length=2mm,post length=1mm}] (5.05,0) --(5.05,1) ;
\draw [->, rounded corners] (5,-0.05) -- ( 4.25,-0.05) -- (3.95,0.45);
\draw [->>, rounded corners] (5,1.05) -- (4.25, 1.05) -- (3.95,0.55);
\node [below] at (4.5, -0) {$\esname_{5}$};
\end{tikzpicture}

\end{tabular}
\caption{
Graphical representation of \CES.
An hyperedge from a set of nodes $X$ to $e$ 
denotes an enabling $X \circ e$, where 
$\circ = \; \vdash$ if the edge has a single arrow, and 
$\circ = \; \Vdash$ if it has a double arrow.
A conflict $a \# b$ is represented by a wavy line
between $a$ and $b$. 
}
\label{fig:ces}
\end{figure}

\begin{example} \label{ex:ces:credits}
Consider the \CES\ in Fig.~\ref{fig:ces}.
The maximal plays of $\esname_1$--$\esname_4$ are $\seq{ab}$, $\seq{ba}$, 
for which we have the following computations:
\begin{description}

\item[$\esname_{1}:$]
\(
  (\emptyset,\emptyset) \xrightarrow{a} (\setenum{a},\emptyset) \xrightarrow{b} (\setenum{a,b},\emptyset)
\),
\hspace{37pt}
\(
  (\emptyset,\emptyset) \xrightarrow{b} (\setenum{b},\setenum{b}) \xrightarrow{a} (\setenum{a,b},\setenum{b})
\).

\item[$\esname_{2}:$] 
\(
  (\emptyset,\emptyset) \xrightarrow{a} (\setenum{a},\setenum{a}) \xrightarrow{b} (\setenum{a,b},\setenum{a})
\),
\hspace{17pt}
\(
  (\emptyset,\emptyset) \xrightarrow{b} (\setenum{b},\setenum{b}) \xrightarrow{a} (\setenum{a,b},\setenum{b})
\).

\item[$\esname_{3}:$]  
\(
  (\emptyset,\emptyset) \xrightarrow{a} (\setenum{a},\setenum{a}) \xrightarrow{b} (\setenum{a,b},\emptyset)
\),
\hspace{27pt}
\(
  (\emptyset,\emptyset) \xrightarrow{b} (\setenum{b},\setenum{b}) \xrightarrow{a} (\setenum{a,b},\setenum{b})
\).

\item[$\esname_{4}:$]  
\(
  (\emptyset,\emptyset) \xrightarrow{a} (\setenum{a},\setenum{a}) \xrightarrow{b} (\setenum{a,b},\emptyset)
\),
\hspace{27pt}
\(
  (\emptyset,\emptyset) \xrightarrow{b} (\setenum{b},\setenum{b}) \xrightarrow{a} (\setenum{a,b},\emptyset)
\).

\end{description}
The maximal plays of $\esname_5$ are 
$\seq{ab}$, $\seq{ba}$, $\seq{ac}$, $\seq{ca}$.
For $\seq{ab}$, $\seq{ba}$, the computations are as those of $\esname_3$,
while for $\seq{ac}$, $\seq{ca}$ the computations are 
as those of $\esname_2$ (with $c$ in place of $b$).
\end{example}

A \emph{strategy} $\Sigma$ for {\pmv A} is a function which associates 
to each finite play $\sigma$ a set of events of {\pmv A} 
such that
if $e \in \Sigma(\sigma)$ then $\sigma e$ is still a play.
A play $\sigma = \seq{e_0 \, e_1 \cdots}$ \emph{conforms} to a strategy $\Sigma$ 
for {\pmv A} if, for all $i \geq 0$, if $e_i \in \invprinc{\pmv A}$, 
then $e_i \in \Sigma(\sigma_i)$.
A play is \emph{fair} w.r.t.\ a strategy $\Sigma$ when 
there are no events in $\sigma$ which are perpetually enabled by $\Sigma$.

Before setting up the crucial notions of fair play and prudent events,
we provide some underlying intuitions.
The definition of prudent strategies and of innocent participants
is mutually coinductive.
A participant {\pmv A} is considered \emph{innocent} 
in a play $\sigma$ when she has done all her prudent events in $\sigma$
(otherwise {\pmv A} is \emph{culpable}).
Hence, if a strategy tells {\pmv A} to do all her prudent events,
then in all \emph{fair} plays these events must either become imprudent,
or be fired.
Given a finite play $\sigma$ of past events, 
an event $e$ is said \emph{prudent} in $\sigma$ whenever
there exists a prudent strategy $\Sigma$ 
which prescribes to do $e$ in $\sigma$.
A strategy for {\pmv A} with past~$\sigma$ (namely, conforming to $\sigma$) 
is prudent whenever, in all fair extensions of $\sigma$
where all other participants are innocent,
the events performed on credit by {\pmv A} are eventually honoured;
at most, the credits coming from the past $\sigma$ will be left.
Notice that we neglect those \emph{unfair} plays
where an action permanently enabled is not eventually performed.
Indeed, an unfair scheduler could perpetually prevent
an honest participant 
from performing a promised action.

\begin{definition}[Fair play] \label{def:ces:fair-play}
A play $\sigma = \seq{e_0 \, e_1 \cdots}$ 
is \emph{fair} w.r.t.\ strategy $\Sigma$ iff:
\[
  \forall i \leq |\sigma|.\;
  \big(
  \forall j : i \leq j \leq |\sigma|.\;
  e \in \Sigma(\sigma_j)
  \big)
  \implies 
  \exists h \geq i.\;
  e_h = e
\]
\end{definition}

\begin{definition}[Prudence] \label{def:ces:prudence}
A strategy $\Sigma$ for {\pmv A} with past $\sigma$ is \emph{prudent} if,
for all fair plays $\sigma'$ extending $\sigma$,
conforming to~$\Sigma$, and
where all ${\pmv B} \neq {\pmv A}$ are innocent,
\[
  \exists k > |\sigma|. \;
  \; \mkcredit{\sigma'_k} \cap \invprinc{\pmv A} 
  \,\subseteq\, 
  \mkcredit{\sigma}
\]
An event $e$ is \emph{prudent} in $\sigma$ if
there exists a prudent strategy $\Sigma$ with past $\sigma$
such that $e \in \Sigma(\sigma)$.

\smallskip\noindent
A participant {\pmv A} 
is \emph{innocent} in $\sigma = \seq{e_0 \, e_1 \cdots}$ iff:
\[
  \forall e \in \invprinc{\pmv A}.\;
  \forall i \geq 0.\;
  \exists j \geq i.\;
  e \text{ is imprudent in } \sigma_j
\]
\end{definition}

Notice that the empty strategy is trivially prudent.

\begin{example} \label{ex:ces:prudence}
Consider the obligations modelled by the five \CES\ in Fig.~\ref{fig:ces},
where $\princ{a} = {\pmv A}$ and $\princ{b} = \princ{c} = {\pmv B}$:
\begin{itemize}

\item in $\esname_{1}$, the only prudent event in the empty play is $a$,
which is enabled by $\emptyset$, and the only culpable participant is {\pmv A}.
In $\seq{a}$, $b$ becomes prudent, and {\pmv B} becomes culpable.
In $\seq{ab}$ no event is prudent and no participant is culpable.

\item in $\esname_2$, there are no prudent events in $\epsilon$.
Instead, event $a$ is prudent in $\seq{b}$, while $b$ is prudent in $\seq{a}$:
this is coherent with the fact that the prudence of an event
does not depend on the assumption that all the events done 
in the past were prudent.
In $\seq{ab}$ and $\seq{ba}$ no events are prudent.

\item in $\esname_3$, event $a$ is prudent in $\epsilon$:
indeed, the only fair play $a \eta$ where {\pmv B}
is innocent is $\seq{ab}$, where $\mkcredit{ab} = \emptyset$.
Instead, $b$ is \emph{not} prudent in $\epsilon$,
because $b \in \mkcredit{b\eta}$ for all $\eta$.
Event $b$ is prudent in $\seq{a}$.

\item in $\esname_4$, both $a$ and $b$ are prudent in $\epsilon$.

\item in $\esname_{5}$,
$a$ is \emph{not} prudent in $\epsilon$,
because if {\pmv B} chooses to do $c$, 
then the credit $a$ can no longer be honoured.
Actually, no events are prudent in $\epsilon$, 
while both $b$ and $c$ are prudent in $\seq{a}$,
and $a$ is prudent in both $\seq{b}$ and $\seq{c}$.

\end{itemize}
\end{example}

We now define when a participant \emph{wins} in a play.
If {\pmv A} is culpable, then she loses.
If {\pmv A} is innocent, but some other participant is culpable,
then {\pmv A} wins.
Otherwise, if all participants are innocent, then 
{\pmv A} wins if she has a positive payoff in the play,
and the play is ``credit-free''.

\begin{definition}[\bf Winning play] \label{def:ces:win}
Define the function $\winsym: \aname \rightarrow \powset{E^{\infty}}$ as follows:
\begin{align*}
  \win{\pmv A}{} 
  \; = \;
  & \setcomp{\sigma \in \payoff{\pmv A}{}}
  {\text{{\pmv A} credit-free in $\sigma$, and all participants are innocent in $\sigma$}} 
  \; \cup \\
  & \setcomp{\sigma}{\text{{\pmv A} innocent in $\sigma$, and some ${\pmv B} \neq {\pmv A}$ is culpable in $\sigma$}}
\end{align*}
where {\pmv A} is \emph{credit-free} in $\sigma$ iff:
\(\;
  \forall e \in \princsym^{-1}(\pmv A).\;
  \forall i \geq 0.\;
  \exists j \geq i.\;
  e \not\in \mkcredit{\sigma_{j}}
\).
\end{definition}

\begin{example}
Notice that innocence and credit-freeness are distinct notions.
For instance, for the contract induced by the \CES\ $\esname_3$ in
Fig.~\ref{fig:ces},
assuming $\princ{a} = {\pmv A}$ and $\princ{b} = \princ{c} = {\pmv B}$,
we have that
in $\sigma = \epsilon$, 
{\pmv A} is credit-free, but not innocent (because a is prudent in $\epsilon$),
in $\sigma = \seq{a}$, {\pmv A} is innocent, but not credit-free 
(because $\mkcredit{\seq{a}} = \setenum{a}$), and
in $\sigma = \seq{ab}$, {\pmv A} is innocent and credit-free.
\end{example}

A key property of contracts is that of \emph{agreement}.
Intuitively, when {\pmv A} agrees on a contract $\cname$,
then she can safely initiate an interaction with the other participants, 
and be guaranteed that the interaction will not ``go wrong''
--- even in the presence of attackers.
This does not mean that {\pmv A} will always succeed in all interactions:
in case {\pmv B} is dishonest, we do not assume that an external authority
will disposses {\pmv B} of $b$ and give it to {\pmv A}.
Participant {\pmv A} will agree on a contract where
she reaches her goals, or she can blame another participant
for a contract violation.
In real-world applications, a judge may provide compensations to {\pmv A},
or impose a punishment to the culpable participant.

We now define when a participant \emph{agrees} on a contract.
We say that $\Sigma$ is \emph{winning} for {\pmv A} iff {\pmv A} wins 
in every fair play which conforms to $\Sigma$.
Intuitively, {\pmv A} is happy to participate in an interaction 
regulated by contract $\cname$ when she has a strategy $\Sigma$
which allows her to win in all fair plays conforming to $\Sigma$.

\begin{definition}[\bf Agreement] \label{def:ces:agreement}
A participant {\pmv A} \emph{agrees} on a contract $\cname$ whenever
{\pmv A} has a winning strategy in $\cname$.
A contract $\cname$ \emph{admits an agreement} whenever all the involved participants 
agree on $\cname$.
\end{definition}

\begin{example} \label{ex:ces:agreement}
Consider the contracts $\cname_i$ where 
the obligations are specified by $\esname_i$ in Fig.~\ref{ex:ces:prudence},
and let the goals of {\pmv A} and {\pmv B} be as follows:
{\pmv A} is happy when she obtains $b$ 
(i.e.\ $\payoffsym{\pmv A} = \setcomp{\sigma}{b \in \mkset{\sigma}}$),
while {\pmv B} is happy when he obtains $a$
($\payoffsym{\pmv B} = \setcomp{\sigma}{a \in \mkset{\sigma}}$).
\begin{itemize}

\item $\cname_{1}$ admits an agreement.
The winning strategies for {\pmv A} and {\pmv B} are, respectively, 
\[
  \Sigma_{\pmv A}(\sigma) = \begin{cases}
    \setenum{a} & \text{if $a \not\in \mkset{\sigma}$} \\
    \emptyset & \text{otherwise}
  \end{cases}
  \hspace{40pt}
  \Sigma_{\pmv B}(\sigma) = \begin{cases}
    \setenum{b} & \text{if $a \in \mkset{\sigma}$ and $b \not\in \mkset{\sigma}$} \\
    \emptyset & \text{otherwise}
  \end{cases}
\]
Roughly, 
the only fair play conforming to $\Sigma_{\pmv A}$ and $\Sigma_{\pmv B}$
where both {\pmv A} and {\pmv B} are innocent 
is $\sigma = \seq{a b}$.
We have that {\pmv A} and {\pmv B} win in $\sigma$, because
both participants are credit-free in $\sigma$
(see Ex.~\ref{ex:ces:credits}),
and $\sigma \in \payoff{\pmv A}{} \cap \payoff{\pmv B}{}$.

\item $\cname_2$ does not admit an agreement.
Indeed, there are no prudent events in $\epsilon$,
hence both {\pmv A} and {\pmv B} are innocent in $\epsilon$.
If no participant takes the first step, then nobody reaches her goals.
If a participant takes the first step, then the resulting trace
is not credit-free.
Thus, no winning strategy exists.

\item $\cname_3$ admits an agreement.
The winning strategies are as for $\cname_1$ above:
{\pmv A} first does $a$, then {\pmv B} does $b$.
While $\cname_1$ and $\cname_3$ are identical from the point
of view of agreements, they differ in that
$\cname_3$ \emph{protects} {\pmv A}, while $\cname_1$ does not.
Intuitively, the enabling $\vdash a$ in $\cname_1$
models an obligation for {\pmv A} also in those contexts 
where no agreement exists,
while $b \Vdash a$ only forces {\pmv A} to do $a$
when $b$ is guaranteed.

\item $\cname_4$ admits an agreement.
In this case the winning strategies for {\pmv A} and {\pmv B} are:
\[
  \Sigma_{\pmv A}(\sigma) = \begin{cases}
    \setenum{a} & \text{if $a \not\in \mkset{\sigma}$} \\
    \emptyset & \text{otherwise}
  \end{cases}
  \hspace{40pt}
  \Sigma_{\pmv B}(\sigma) = \begin{cases}
    \setenum{b} & \text{if $b \not\in \mkset{\sigma}$} \\
    \emptyset & \text{otherwise}
  \end{cases}
\]
That is, a participant must be ready to do her action without waiting
for the other participant to make the first step.

\item $\cname_{5}$ does not admit an agreement.
Since no events are prudent in $\epsilon$,
both participants are innocent in $\epsilon$,
but if they cannot reach their goals by doing nothing.
If {\pmv A} does $a$, then {\pmv B} can choose to do $c$.
This makes {\pmv B} innocent (and winning), but then {\pmv A}
loses, because not credit-free in $\seq{a c}$.

\end{itemize}
\end{example}

  \subsection{Propositional Contract Logic} \label{pcl}

We briefly review Propositional Contract Logic 
(\pcl~\cite{BZ10lics}),
\pcl\ extends intuitionistic propositional logic IPC
with the connective~$\coimp$, called \emph{contractual implication}.
We assume that the atoms of \pcl\ are the events in $E$.
The formulae of \pcl\ are defined as follows:
\[
  p,q \;\; ::= \;\;
  \bot \;\mid\; \top \;\mid\; \atom{a} \;\mid\; \lnot p \;\mid\; 
  p \lor q \;\mid\; p \land q \;\mid\; p \imp q \;\mid\; p \coimp q
\]

A proof system for \pcl\ is defined in~\cite{BZ10lics} in terms 
of Gentzen-style rules (Fig.~\ref{fig:pcl:gentzen}),
which extend those of IPC.
In all the rules, $\Delta$ is a set of \pcl\ formulae.
Decidability of \pcl\ has been established in~\cite{BZ10lics} 
by proving that the Gentzen-style proof system of \pcl\
enjoys cut elimination and the subformula property.

\begin{figure}[t]
\[
\begin{array}{c}
  \irule
  {\Delta \;\vdash\; q}
  {\Delta \;\vdash\; p \coimp q}
  \;\nrule{(Zero)} 
  \qquad
  \irule
  {\begin{array}{l}
      \Delta,\ p \coimp q,\ c \;\vdash\; p \\
      \Delta,\ p \coimp q,\ q \;\vdash\; c \coimp d
  \end{array}}
  {\Delta,\ p \coimp q \;\vdash\; c \coimp d}
  \;\nrule{(Lax)} 
  \qquad
  \irule
  {\begin{array}{c}
      \Delta,\ p \coimp q,\ r \;\vdash\; p \\
  \Delta,\ p \coimp q,\ q \;\vdash\; r
  \end{array}
  }
  {\Delta,\ p \coimp q \;\vdash\; r}
  \;\nrule{(Fix)} 
\end{array}
\]
\caption{Sequent calculus for \pcl\ (rules for $\coimp$; the full set of rules is in~\iftoggle{proofs}{Fig.~\ref{fig:pcl:gentzen-full}}{\cite{ces-pcl-long}}).}
\label{fig:pcl:gentzen}
\end{figure}

In this paper we shall mainly consider the Horn fragment of \pcl, 
which comprises atoms, conjunctions, and non-nested 
$\imp$/$\coimp$ implications.
Let $\alpha,\beta$ range over conjunctions of atoms.
A \emph{Horn \pcl\ theory} is a \fxnote{finite} set of clauses of the form
$\alpha \imp a$ or $\alpha \coimp a$.
The clause $a$ is a shorthand for $\top \imp a$.
We shall denote with $\mkset{\alpha}$ the set of atoms in $\alpha$.

\section{Proof traces in \pcl} \label{proof-traces}

In this section we introduce the notion of \emph{proof traces},
namely the sequences of atoms respecting the order imposed by proofs in
\pcl.
To do that, we first define a natural deduction system for \pcl,
which extends that of IPC with the rules in Fig.~\ref{fig:pcl:nd}.
Provable formulae are contractually implied,
according to rule~\nrule{($\coimp$I1)}.
Rule~\nrule{($\coimp$I2)} provides $\coimp$ with the same weakening properties
of $\imp$.
The crucial rule is \nrule{($\coimp$E)}, which allows for the elimination of $\coimp$.
Compared to the rule for elimination of $\imp$ in IPC, the only difference is that
in the context used to deduce the antecedent $p$, 
rule \nrule{($\coimp$E)} also allows for using as hypothesis the consequence~$q$.

\begin{example} \label{ex:pcl:proof}
Let $\Delta = a \imp b, b \coimp a$. 
A proof of $\Delta \vdash a$ in natural deduction is:
\[
  \irule
  {\Delta \vdash b \coimp a \quad
  \irule
    {\Delta \vdash a \imp b \quad 
      \Delta,a \vdash a} 
    {\Delta,a \vdash b} 
    \nrule{($\imp$E)}}
  {\Delta \vdash a}
  \nrule{($\coimp$E)}  
\]
\end{example}

\begin{figure}[t]
\[
  \irule
  {\Delta \vdash q}
  {\Delta \vdash p \coimp q}
  \;\nrule{($\coimp$I1)}
  \quad\;\;
  \irule
  {\begin{array}{c} \\ \Delta \vdash p' \coimp q' \end{array} \quad
   \begin{array}{l}
     \Delta,p \vdash p' \\
     \Delta,q' \vdash p \coimp q
   \end{array}}
 {\Delta \vdash p \coimp q}
  \;\nrule{($\coimp$I2)}
  \quad\;\;
  \irule
  {\Delta \vdash p \coimp q \quad \Delta, q \vdash p}
  {\Delta \vdash q}
  \;\nrule{($\coimp$E)}
\]
\vspace{-10pt}
\caption{Natural deduction for \pcl\ (rules for $\coimp$; 
the full set of rules is in~\iftoggle{proofs}{Fig.~\ref{fig:pcl:nd-full}}{\cite{ces-pcl-long}}).}
\label{fig:pcl:nd}
\end{figure}

The natural deduction system of Fig.~\ref{fig:pcl:nd} 
is equivalent to the Gentzen calculus of~\cite{BZ10lics}.

\begin{restatable}{mythm}{ndgentzen}
\label{th:pcl:nd-gentzen}
Thre exists a proof $\pi$ of $\Delta \vdash p$  in natural deduction iff 
there exists a proof $\pi^{*}$ of $\Delta \vdash p$  in the sequent calculus 
of~\cite{BZ10lics}.
\end{restatable}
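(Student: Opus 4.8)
The plan is to prove the two directions of the equivalence separately, each by induction on the structure of the given proof, exhibiting explicit transformations between natural-deduction proofs and sequent-calculus proofs. Since the IPC fragments of the two systems are already known to be interderivable, the only interesting work concerns the three new rules for $\coimp$ on each side, namely \nrule{($\coimp$I1)}, \nrule{($\coimp$I2)}, \nrule{($\coimp$E)} in natural deduction, and \nrule{(Zero)}, \nrule{(Lax)}, \nrule{(Fix)} in the sequent calculus.

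For the direction from natural deduction to the sequent calculus, I would proceed by induction on the height of $\pi$, translating each inference rule locally. The elimination rule \nrule{($\coimp$E)} is the key case: given translated subproofs of $\Delta \vdash p \coimp q$ and $\Delta, q \vdash p$, I must produce a sequent-calculus proof of $\Delta \vdash q$. The natural candidate is rule \nrule{(Fix)}, whose two premises $\Delta', r \vdash p$ and $\Delta', q \vdash r$ (with $r := q$ and the $p \coimp q$ moved into $\Delta'$) match exactly the shape of the \nrule{($\coimp$E)} premises once the formula $p \coimp q$ is made explicit in the context; here I would invoke weakening and the cut rule of the sequent calculus to align contexts. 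The introduction rules \nrule{($\coimp$I1)} and \nrule{($\coimp$I2)} translate to \nrule{(Zero)} and to a combination of \nrule{(Lax)} with cut, respectively. Because the sequent calculus of~\cite{BZ10lics} is stated to enjoy \emph{cut elimination}, I am free to use cut liberally in this construction and rely on that result to stay within the system.

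For the converse direction, I would again use induction on proof height, now translating each sequent-calculus rule into natural deduction. The crucial point is that rule \nrule{(Fix)}, which simultaneously discharges the contractual hypothesis and closes the circular dependency, must be simulated by \nrule{($\coimp$E)} together with the natural-deduction implication rules; I would show that from proofs of $\Delta, p \coimp q, r \vdash p$ and $\Delta, p \coimp q, q \vdash r$ one can assemble a proof of $\Delta, p \coimp q \vdash r$ by feeding them through \nrule{($\coimp$E)}. Rule \nrule{(Lax)} is handled analogously but producing a contractual-implication conclusion, using \nrule{($\coimp$I2)} to reintroduce the outer $\coimp$, while \nrule{(Zero)} maps directly to \nrule{($\coimp$I1)}.

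\textbf{The main obstacle} I expect is the treatment of the \nrule{(Lax)} and \nrule{(Fix)} rules, because they fuse an assume-guarantee step with an introduction or a use of the contractual hypothesis in a single inference, whereas natural deduction separates introduction from elimination. Matching the contexts is delicate: the sequent rules carry $p \coimp q$ inside $\Delta$ on both premises and the conclusion, and one must verify that the natural-deduction simulation does not require this hypothesis to be consumed and regenerated in an ill-founded way. Establishing admissibility of the structural steps (weakening, and the cut inherited from~\cite{BZ10lics}) for both systems, and checking that the circular reasoning embodied by \nrule{(Fix)}/\nrule{($\coimp$E)} is faithfully preserved under translation, is where the real content of the argument lies; the remaining IPC rules go through by the standard Gentzen--Prawitz correspondence.
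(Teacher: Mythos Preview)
Your approach is correct and is the canonical one for this kind of equivalence theorem: induction on derivations in each direction, with the IPC rules handled by the standard Gentzen--Prawitz translation and the $\coimp$ rules paired off as you describe (\nrule{($\coimp$I1)}$\leftrightarrow$\nrule{(Zero)}, \nrule{($\coimp$I2)}$\leftrightarrow$\nrule{(Lax)}, \nrule{($\coimp$E)}$\leftrightarrow$\nrule{(Fix)}, using cut and weakening to align contexts). The paper's own proof is deferred to the extended version~\cite{ces-pcl-long} and is not present in the source provided here, but given the shape of the rules there is essentially no alternative route; your pairing of the rules and your use of cut (legitimised by the cut-elimination result quoted from~\cite{BZ10lics}) are exactly what one expects the omitted proof to do.

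One small sharpening: in the sequent-to-natural-deduction direction for \nrule{(Fix)}, you should make explicit that after obtaining $\Delta, p\coimp q \vdash q$ via \nrule{($\coimp$E)} you still need a substitution step with the second premise $\Delta, p\coimp q, q \vdash r$ to reach the actual conclusion $\Delta, p\coimp q \vdash r$; this is routine admissible cut in natural deduction, but it is the one place where the simulation is two steps rather than one, and it is worth spelling out.
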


For proving atoms (or their conjunctions) in Horn \pcl\ theories, 
a strict subset of the natural deduction rules suffices.

\begin{restatable}{mylem}{ndhorn}
\label{lem:pcl:nd-horn}
Let $\Delta$ be a Horn \pcl\ theory.
If $\Delta \vdash \alpha$ in natural deduction,
then a proof of $\Delta \vdash \alpha$ exists which uses only the rules
\nrule{(Id)}, \nrule{($\land$I)}, \nrule{($\land$E1)},\nrule{($\land$E2)},
\nrule{($\imp$E)}, and \nrule{($\coimp$E)}.
\end{restatable}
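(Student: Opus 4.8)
The plan is to reduce the statement to a normalisation argument in the sequent calculus of~\cite{BZ10lics}, exploiting the equivalence already established in Theorem~\ref{th:pcl:nd-gentzen}. Starting from a natural-deduction proof of $\Delta \vdash \alpha$, I first obtain a sequent proof of the same judgement, and then apply cut elimination to get a cut-free sequent derivation $\pi^{*}$ of $\Delta \vdash \alpha$. By the subformula property of the cut-free system, every formula occurring in $\pi^{*}$ is a subformula of a clause of $\Delta$ or of $\alpha$. Since $\Delta$ is Horn and $\alpha$ is a conjunction of atoms, these subformulas are confined to atoms, conjunctions of atoms, the clauses $\beta \imp a$ and $\beta \coimp a$ themselves, and $\top$. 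The final step is to read $\pi^{*}$ back into natural deduction, checking that only the six rules in the statement are needed. Note that cut elimination is essential here: it is precisely what lets us discard the introduction rules that may freely occur in the given, possibly non-normal, natural-deduction proof.

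The key structural observation is an invariant along $\pi^{*}$: every sequent has a succedent that is a conjunction of atoms. This follows by a straightforward induction that propagates the property from the conclusion $\Delta \vdash \alpha$ to the premises of each rule. Indeed, \nrule{($\land$R)} splits a conjunction of atoms into smaller ones, the atomic axiom closes an atom, and the left rules \nrule{($\imp$L)}, \nrule{($\land$L)} and \nrule{(Fix)} either leave the succedent unchanged or replace it with the antecedent $\beta$ of a Horn clause, which is again a conjunction of atoms. Crucially, \nrule{(Zero)} and \nrule{(Lax)} produce a contractual implication in the succedent, and the $\imp$-right rule produces an implication; by the invariant none of these can occur, and the absence of $\lor$, $\lnot$, $\bot$ from all subformulas rules out the remaining connective rules. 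Hence $\pi^{*}$ uses only the atomic axiom, \nrule{($\land$R)}, \nrule{($\land$L)}, \nrule{($\imp$L)} and \nrule{(Fix)}, with atomic clauses $a$ (read as $\top \imp a$) discharged trivially and amounting to a direct use of a hypothesis.

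It then remains to translate each surviving rule into natural deduction within the allowed set: the axiom becomes \nrule{(Id)}, \nrule{($\land$R)} becomes \nrule{($\land$I)}, \nrule{($\land$L)} is simulated by the projections \nrule{($\land$E1)} and \nrule{($\land$E2)}, and \nrule{($\imp$L)} on a clause $\beta \imp a$ becomes \nrule{($\imp$E)}. I expect the single genuinely delicate point to be the correspondence between \nrule{(Fix)} and \nrule{($\coimp$E)}, since \nrule{(Fix)} derives an arbitrary succedent $r$ from the premises $\Gamma, p \coimp q, r \vdash p$ and $\Gamma, p \coimp q, q \vdash r$, whereas \nrule{($\coimp$E)} derives exactly $q$ from a proof of $p$ under the hypothesis $q$. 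The idea is to first graft the (inductively translated) derivation of the second premise into the assumption $r$ of the first, obtaining a proof of $p$ under the hypothesis $q$; applying \nrule{($\coimp$E)} to the clause $p \coimp q$ (available by \nrule{(Id)}) then yields $q$, and a final grafting into the assumption $q$ of the second premise recovers $r$. Since such substitution of derivations preserves the restricted rule set, the construction goes through by induction on $\pi^{*}$, completing the proof.
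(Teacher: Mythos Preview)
Your approach is correct and follows exactly the route the paper sets up: Theorem~\ref{th:pcl:nd-gentzen} is proved immediately before this lemma precisely so that one can pass to the Gentzen system, invoke cut elimination and the subformula property (both cited from~\cite{BZ10lics}), and then read the restricted cut-free proof back into natural deduction. The paper defers the detailed argument to the long version~\cite{ces-pcl-long}, but your reasoning---the succedent invariant ruling out \nrule{(Zero)}, \nrule{(Lax)} and the right-introduction rules, together with the substitution-based simulation of \nrule{(Fix)} by \nrule{($\coimp$E)}---is the intended strategy.
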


A key observation is that each proof in Horn \pcl\ induces
a set of atom orderings which are compatible with the proof.
Each of these orderings is associated with a sequence of atoms,
called \emph{proof trace}.
To give some intuition, consider the elimination rule for $\imp$:
\[
  \irule
  {\Delta \vdash \alpha \imp a \qquad \Delta \vdash \alpha}
  {\Delta \vdash a}
  \;\nrule{($\imp$E)}
\]
The rule requires a proof of all the atoms in $\alpha$ in order to 
construct a proof of~$a$.
Accordingly, if $\sigma$ is a proof trace of $\Delta$,
then $\sigma a$ if a proof trace of $\Delta$.

Consider now the elimination rule for $\coimp$
\[
  \irule
  {\Delta \vdash \alpha \coimp a \qquad \Delta,a \vdash \alpha}
  {\Delta \vdash a}
  \;\nrule{($\coimp$E)}
\]
Here, the intuition is that $\alpha$ needs not necessarily be proved before $a$:
it suffices to prove $\alpha$ by taking $a$ as hypothesis.
Assuming that $\sigma$ is a proof trace of $\Delta,a$,
the proof traces of $\Delta$ include all the interleavings
between $\sigma$ and $a$.

\begin{definition}[Proof traces]
\label{def:pcl:proof-trace}
For a Horn \pcl\ theory $\Delta$, 
we define the set of sequences of atoms $\sem{\Delta}$
by the rules in Fig.~\ref{fig:pcl:proof-traces}.
For $\sigma,\eta \in E^*$, we denote with
$\sigma \eta$ the concatenation of $\sigma$ and $\eta$,
and with $\sigma \mid \eta$ the set of interleavings 
of $\sigma$ and $\eta$.
We assume that both operators remove duplicates from the right,
e.g.\ $aba \mid ca = ab \mid ca = \setenum{abc,acb,cab}$.
We call each $\sigma \in \sem{\Delta}$ a \emph{proof trace} of $\Delta$.
\end{definition}

\begin{figure}[t]
\[
\begin{array}{c}
  \irule
  {}
  {\epsilon \in \sem{\Delta}}
  \,\nrule{($\epsilon$)}
  \hspace{12pt}
  \irule
  {\alpha \imp a \in \Delta \quad
   \sigma \in \sem{\Delta} \quad 
   \mkset{\alpha} \subseteq \mkset{\sigma}}
  {\sigma \, a \in \sem{\Delta}}
  \,\nrule{($\imp$)}
  \hspace{12pt}
  \irule
  {\alpha \coimp a \in \Delta \quad 
   \sigma \in \sem{\Delta, a} \quad
   \mkset{\alpha} \subseteq \mkset{\sigma}}
  {\sigma \mid a \,\subseteq\, \sem{\Delta}}
  \,\nrule{($\coimp$)}
\end{array}
\]
\vspace{-10pt}
\caption{Proof traces of Horn \pcl.}
\label{fig:pcl:proof-traces}
\end{figure}

\begin{example} \label{ex:pcl:proof-trace}
Consider the following Horn \pcl\ theories
(recall that $a \eqdef \top \imp a$):
\[
\begin{array}{ll}
  \Delta_1 = \setenum{a \imp b,\, a}
  \hspace{40pt}
  &
  \Delta_2 = \setenum{a \imp b,\, b \imp a}
  \\
  \Delta_3 = \setenum{a \imp b,\, b \coimp a}
  &
  \Delta_4 = \setenum{a \coimp b,\, b \coimp a}
\end{array}
\]
(notice the resemblance with the \CES\ 
$\esname_1$--$\,\esname_4$ in Fig.~\ref{fig:ces}).
By Def.~\ref{def:pcl:proof-trace}, we have:
\[
\begin{array}{ll}
  \sem{\Delta_1} = \setenum{\epsilon,\, a,\, ab}
  \hspace{40pt}
  &
  \sem{\Delta_2} = \setenum{\epsilon}
  \\
  \sem{\Delta_3} = \setenum{\epsilon,\, ab}
  &
  \sem{\Delta_4} = \setenum{\epsilon,\, ab, \, ba}
\end{array}
\]
For instance, we deduce $ab \in \sem{\Delta_3}$ through the following derivation:
\vspace{-5pt}
\begin{small}
\[
  \irule
  {\begin{array}{c} \\ b \coimp a \in \Delta_3 \end{array} \;
  \irule
    {\begin{array}{c} \\ a \imp b \in \Delta_3,a \end{array} \; 
     \irule{
       \begin{array}{c} \\ \top \imp a \in \Delta_3,a \end{array}
       \quad
       \irule{}{\epsilon \in \sem{\Delta_3,a}} \nrule{($\epsilon$)}
     }
     {a \in \sem{\Delta_3,a}} \nrule{($\imp$)} \;
     \begin{array}{c} \\ \mkset{a} \subseteq \mkset{a} \end{array}}
    {a b \in \sem{\Delta_3,a}} \nrule{($\imp$)} \;
     \begin{array}{c} \\ \mkset{b} \subseteq \mkset{ab} \end{array}}
  {ab = ab \mid a \in \sem{\Delta_3}} \nrule{($\coimp$)}
\]
\end{small}
Notice that $ba \not\in \sem{\Delta_3}$:
indeed, to derive any non-empty $\alpha$ from $\Delta_3$
one needs to use both $a \imp b$ and $b \coimp a$,
hence all non-empty proof traces must contain
both $a$ and $b$;
since $b$ does not occur at the right of a contractual implication,
it cannot be interleaved; thus, $ba$ is not derivable.
\fxnote{check}
\end{example}

We now define, starting from a set $X$ of atoms, 
which atoms may be proved immediately after,
following some proof trace. 
We call these atoms \emph{urgent}, 
and we denote with $\curgent[\Delta]{X}$
the set of urgent atoms in $X$.
For instance, with $\Delta_1$ in Ex.~\ref{ex:pcl:proof-trace},
we have $\curgent[\Delta_1]{\emptyset} = \setenum{a}$,
$\curgent[\Delta_1]{\setenum{a}} = \setenum{b}$, 
$\curgent[\Delta_1]{\setenum{b}} = \setenum{a}$, and 
$\curgent[\Delta_1]{\setenum{a,b}} = \emptyset$.

\begin{definition} \label{def:pcl:urgent}
For a set $X \subseteq E$ and a Horn \pcl\ theory $\Delta$, we define
$\curgent[\Delta]{X}$ as:
\[
  \curgent[\Delta]{X} 
  \;\; = \;\;
  \setcomp{a \not\in X}
  {
  \exists \sigma, \sigma'. \;\;
  \mkset{\sigma} = X  
  \;\land\;
  \sigma \, a \, \sigma' \in \sem{\Delta, X}
  }
\]
\end{definition}

Theorem~\ref{th:pcl:enc-u} below characterizes urgent atoms
in terms of provability.
This is obtained by a suitable rewriting of Horn \pcl\ theories, 
which separates the urgent atoms from the provable ones.

Technically, in Def.~\ref{def:pcl:enc-u} we introduce
an endomorphism $\enc[\urgent]{\cdot}$ of Horn \pcl\ theories.
Let $\star \in \setenum{!,R,U}$.
We assume three injections $\star: \atoms \rightarrow \atoms$, 
such that $\amod[!]{\atoms}$, $\amod[R]{\atoms}$ and $\amod[U]{\atoms}$ 
are pairwise disjoint. 
For a set of atoms $X \subseteq \atoms$,
we denote with $\amod[\star]{X}$ the theory $\setcomp{\amod[\star]{e}}{e \in X}$.
Intuitively, the atoms of the form $\amod[!]{a}$ correspond to
actions already happened in the past,
the atoms $\amod[U]{a}$ comprise the urgent actions,
while the atoms $\amod[R]{a}$ are those actions which can be eventually 
reached by performing the urgent ones.

Below, we denote with $\primes{\Delta}$ the set of all atoms in $\Delta$.
We assume that $\primes{\Delta} \cap\, \amod[\star]{\atoms} = \emptyset$,
and that $a$ stands for an atom not in $\amod[\star]{E}$.
For a set $X \subseteq \amod[!]{E} \cup \amod[R]{E} \cup \amod[U]{E}$,
we define the projection
\(
  \unmod[\star]{X} = \setcomp{e \in E}{\amod[\star]{e} \in X}
\).
When $\alpha = a_1 \land \cdots \land a_n$, we write
$\amod[\star]{\alpha} = \amod[\star]{a_1} \land \cdots \land \amod[\star]{a_n}$.
When $n = 0$, $\amod[\star]{\alpha} = \top$.

\begin{definition} \label{def:pcl:enc-u} 
The endomorphism $\enc[\urgent]{\cdot}$ of Horn \pcl\ theories is defined as:
\begin{align*}
  \enc[\urgent]{\Delta, \alpha \circ \atom{a}}
  \; & = \; \textstyle
  \enc[\urgent]{\Delta},\;
  {\enc[\urgent]{\alpha \circ \atom{a}}},\;
  \Omega(\primes{\alpha \circ \atom{a}})
  && \text{for $\circ \in \setenum{\imp,\coimp}$}
  \\
  \Omega(X) 
  \; & = \; 
  \textstyle
  \setcomp{\amod[!]{\atom{a}} \imp \amod[U]{\atom{a}}}{\atom{a} \in X} 
  \;\cup\; 
  \setcomp{\amod[U]{\atom{a}} \imp \amod[R]{\atom{a}}}{\atom{a} \in X}
  \\
  \enc[\urgent]{\alpha \imp \atom{a}}
  \; & = \;
  \setenum{\amod[!]{\alpha} \imp \amod[U]{\atom{a}},\; \amod[R]{\alpha} \imp \amod[R]{\atom{a}}}
  \\
  \enc[\urgent]{\alpha \coimp \atom{a}}
  \; & = \;
  \setenum{\amod[R]{\alpha} \coimp \amod[U]{\atom{a}}}
\end{align*}
\end{definition}

The encoding of an implication $\alpha \imp a$ contains
$\amod[!]{\alpha} \imp \amod[U]{a}$, 
meaning that $a$ becomes urgent when its preconditions $\alpha$ have been done,
and $\amod[R]{\alpha} \imp \amod[R]{a}$, meaning that
$a$ is reachable whenever its preconditions are such.
The encoding of a contractual implication $\alpha \coimp a$ contains
$\amod[R]{\alpha} \coimp \amod[U]{a}$, meaning that
$a$ is urgent when its preconditions are guaranteed to be reachable.

\begin{example}
For the \pcl\ theory $\Delta_3 = \setenum{a \imp b,\; b \coimp a}$
in Ex.~\ref{ex:pcl:proof-trace}, we have:
\begin{align*}
  \enc[\urgent]{\Delta_3} \; = \;\;
  \{ &
  \amod[!]{\atom a} \imp \amod[U]{\atom b},
  \;\;
  \amod[R]{\atom a} \imp \amod[R]{\atom b},
  \;\;
  \amod[R]{\atom b} \coimp \amod[U]{\atom a},
  \;\;
  \\
  & \amod[!]{\atom a} \imp \amod[U]{\atom a},
  \;\;
  \amod[!]{\atom b} \imp \amod[U]{\atom b},
  \;\;
  \amod[U]{\atom a} \imp \amod[R]{\atom a},
  \;\;
  \amod[U]{\atom b} \imp \amod[R]{\atom b}
  \}
\end{align*}
We have that 
$\enc[\urgent]{\Delta_3} \vdash \amod[U]{\atom a}$ and
$\enc[\urgent]{\Delta_3} \not\vdash \amod[U]{\atom b}$;
also, $\enc[\urgent]{\Delta_3},\amod[!]{\atom a} \vdash \amod[U]{\atom b}$.
Notice that if the clause $b \coimp a$ were mapped by $\enc[\urgent]{\,}$ 
to $\amod[R]{b} \imp \amod[U]{a}$ (without contractual implication), 
then no atoms would have been provable in $\enc[\urgent]{\Delta_3}$.
\end{example}

The following lemma states that the atoms $a$ for which $\amod[R]{a}$
is derivable from $\enc[\urgent]{\Delta}$
are exactly those atoms which occur in some proof trace of $\Delta$.

\begin{restatable}{mylem}{proofreach}
\label{lem:pcl:proof-reach}
\(
  a \in \bigcup \mkset{\sem{\Delta}}
  \iff
  \enc[\urgent]{\Delta} \vdash \amod[R]{a}  
\)
\end{restatable}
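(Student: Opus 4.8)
The plan is to prove the two implications separately, reducing both to an analysis of how the three layers $\amod[!]{\cdot}$, $\amod[U]{\cdot}$, $\amod[R]{\cdot}$ interact in $\enc[\urgent]{\Delta}$. The backbone is a \emph{substitution property} relating the encoding of an extended theory to that of the original theory augmented with hypotheses: for every set of atoms $X$ and every atom $b$,
\[
  \enc[\urgent]{\Delta, X} \vdash \amod[R]{b}
  \quad\Longleftrightarrow\quad
  \enc[\urgent]{\Delta}, \amod[R]{X} \vdash \amod[R]{b}.
\]
This holds because $\enc[\urgent]{\Delta, X}$ differs from $\enc[\urgent]{\Delta}$ only by the facts $\amod[U]{x},\amod[R]{x}$ (for $x\in X$) together with $\amod[!]{x}\imp\amod[U]{x}$ and $\amod[U]{x}\imp\amod[R]{x}$; the latter two are subsumed once $\amod[U]{x},\amod[R]{x}$ are available, and since every $\amod[U]{\cdot}$ atom occurs on the left of a clause only in $\amod[U]{x}\imp\amod[R]{x}$, the hypothesis $\amod[U]{x}$ is redundant given $\amod[R]{x}$. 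I would establish this first, as it is precisely what lets the induction cross the theory change $\Delta\mapsto\Delta,a$ imposed by the proof-trace rule \nrule{($\coimp$)}.

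For the forward implication ($\Rightarrow$) I would prove, for all Horn theories $\Delta$ simultaneously, by induction on the derivation of $\sigma\in\sem{\Delta}$, that every atom $b\in\mkset{\sigma}$ satisfies $\enc[\urgent]{\Delta}\vdash\amod[R]{b}$; the premise $\sigma\in\sem{\Delta,a}$ of \nrule{($\coimp$)} is then an instance of the statement for the theory $\Delta,a$. The \nrule{($\epsilon$)} case is vacuous. For \nrule{($\imp$)} the induction hypothesis gives $\enc[\urgent]{\Delta}\vdash\amod[R]{\alpha}$ (as $\mkset{\alpha}\subseteq\mkset{\sigma}$), and the clause $\amod[R]{\alpha}\imp\amod[R]{a}$ yields $\amod[R]{a}$. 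The \nrule{($\coimp$)} case is the crux: from $\sigma\in\sem{\Delta,a}$ and the induction hypothesis I get $\enc[\urgent]{\Delta,a}\vdash\amod[R]{\alpha}$, hence by the substitution property $\enc[\urgent]{\Delta},\amod[R]{a}\vdash\amod[R]{\alpha}$, and combining with $\amod[U]{a}\imp\amod[R]{a}$ also $\enc[\urgent]{\Delta},\amod[U]{a}\vdash\amod[R]{\alpha}$. Since $\amod[R]{\alpha}\coimp\amod[U]{a}\in\enc[\urgent]{\Delta}$, an application of \nrule{($\coimp$E)} (with $p=\amod[R]{\alpha}$ and $q=\amod[U]{a}$) gives $\enc[\urgent]{\Delta}\vdash\amod[U]{a}$, whence $\amod[R]{a}$; a final cut via the substitution property propagates $\amod[R]{b}$ from $\enc[\urgent]{\Delta},\amod[R]{a}$ to $\enc[\urgent]{\Delta}$ for the remaining atoms of $\sigma$. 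This circular assume-guarantee step is the delicate point.

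For the backward implication ($\Leftarrow$) I would analyse a cut-free proof of $\enc[\urgent]{\Delta}\vdash\amod[R]{a}$ and reconstruct a proof trace of $\Delta$ containing $a$. Because $\amod[!]{\cdot}$ atoms are never provable in $\enc[\urgent]{\Delta}$, the only ways to obtain $\amod[R]{c}$ are: $c$ is a fact of $\Delta$; an $\imp$-clause $\alpha\imp c$ with $\amod[R]{\alpha}$ already provable; or $\amod[U]{c}\imp\amod[R]{c}$ from a provable $\amod[U]{c}$, where $\amod[U]{c}$ comes either from a fact or from a contractual clause $\alpha\coimp c$ via the circular derivation of $\amod[R]{\alpha}$ under the hypothesis $\amod[U]{c}$. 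These cases mirror \nrule{($\imp$)} and \nrule{($\coimp$)} exactly; I would set $R=\setcomp{c}{\enc[\urgent]{\Delta}\vdash\amod[R]{c}}$ and build, by induction on the derivations of the $\amod[R]{\cdot}$/$\amod[U]{\cdot}$ atoms, a single proof trace $\sigma\in\sem{\Delta}$ with $\mkset{\sigma}=R$. The contractual case uses the substitution property in reverse: the circular subproof of $\amod[U]{c}$ shows $\amod[R]{\alpha}$ is provable from $\enc[\urgent]{\Delta},\amod[R]{c}$, i.e.\ from $\enc[\urgent]{\Delta,c}$, so by the induction hypothesis $\mkset{\alpha}$ lies in a proof trace of $\Delta,c$, and rule \nrule{($\coimp$)} then interleaves $c$ to yield the desired trace of $\Delta$.

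I expect the main obstacle to be organising the induction so that it survives the theory change in the $\coimp$-rule: the statement must be proved for all Horn theories at once (equivalently, relativised to a set of assumed atoms $P$, reading $\sem{\Delta,P}$ against $\enc[\urgent]{\Delta},\amod[R]{P}$), and the substitution property must be in place before either direction can close its $\coimp$ case. The genuinely non-routine ingredient in both directions is the equivalence between the circular use of \nrule{($\coimp$E)}/\nrule{($\coimp$)} and the ability to prove the precondition $\amod[R]{\alpha}$ while assuming the conclusion $\amod[U]{a}$; verifying that this is captured exactly by the clause $\amod[R]{\alpha}\coimp\amod[U]{a}$, and not by a plain implication (as the example after Def.~\ref{def:pcl:enc-u} warns), is where the real content lies.
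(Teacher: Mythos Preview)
The paper defers the proof of this lemma to its long version, so there is no in-text argument to compare against; I can only assess your plan on its merits.

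Your substitution property and the forward direction are correct. The observation that $\amod[U]{\cdot}$ atoms appear on the left of a clause only in $\amod[U]{x}\imp\amod[R]{x}$ is exactly what makes $\enc[\urgent]{\Delta,X}$ and $\enc[\urgent]{\Delta},\amod[R]{X}$ interderivable on $\amod[R]{\cdot}$ goals, and the \nrule{($\coimp$)} case of the forward induction closes precisely as you describe via \nrule{($\coimp$E)} with $p=\amod[R]{\alpha}$, $q=\amod[U]{a}$.

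The backward direction has the right case analysis but two points need to be made explicit. First, ``induction on the derivations'' must be pinned down so that the theory change survives: if you induct on the height of a cut-free derivation of $\enc[\urgent]{\Delta}\vdash\amod[R]{a}$ (quantifying over all Horn $\Delta$), then in the \nrule{($\coimp$E)} case the subderivation of $\enc[\urgent]{\Delta},\amod[U]{c}\vdash\amod[R]{\alpha}$ is strictly shorter, and weakening it to $\enc[\urgent]{\Delta,c}$ does not increase height, so the IH does apply to the new theory --- but you should say this. Second, the IH as stated only gives, for each $b\in\mkset{\alpha}$, \emph{some} proof trace of $\Delta,c$ containing $b$; to fire rule \nrule{($\coimp$)} you need a \emph{single} $\sigma\in\sem{\Delta,c}$ with $\mkset{\alpha}\subseteq\mkset{\sigma}$. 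You can close this either by proving a merging lemma for proof traces (if $\sigma_1,\sigma_2\in\sem{\Delta}$ then some $\sigma\in\sem{\Delta}$ has $\mkset{\sigma}\supseteq\mkset{\sigma_1}\cup\mkset{\sigma_2}$), or by strengthening the invariant to ``there exists $\sigma\in\sem{\Delta}$ with $\mkset{\sigma}=R_\Delta$'' and checking that $R_{\Delta,c}=R_\Delta$ once $c\in R_\Delta$ (which follows from your substitution property plus cut). Either route is routine, but one of them must be spelled out for the argument to close.
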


The following lemma relates proof traces with 
urgent atoms derivable from $\enc[\urgent]{\Delta}$.
The $(\Leftarrow)$ direction states that (any prefix of)
a proof trace is made by urgent atoms in sequence.
The $(\Rightarrow)$ direction states that a sequence of 
urgent atoms can be extended to a proof trace.

\begin{restatable}{mylem}{proofurgent}
\label{lem:pcl:proof-urgent}
Let $\sigma = \seq{e_0 \cdots e_n}$. 
Then, 
\[
  \forall i \in 0..n.\;
  \enc[\urgent]{\Delta},\, \amod[!]{\mkset{\sigma_i}} 
  \vdash 
  \amod[U]{e_i}
  \;\;\iff\;\;
  \exists \eta.\; \sigma\eta \in \sem{\Delta} 
\]
\end{restatable}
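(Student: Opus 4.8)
The plan is to prove the biconditional by induction on the length of $\sigma = \seq{e_0 \cdots e_n}$, reducing it to a one-step extension claim. Writing $X = \mkset{\sigma_n}$ for the set of atoms of the prefix $\sigma_n = \seq{e_0 \cdots e_{n-1}}$, the left-hand side unfolds as the same condition for $\sigma_n$ conjoined with $\enc[\urgent]{\Delta}, \amod[!]{X} \vdash \amod[U]{e_n}$, while the right-hand side ``$\sigma$ is a prefix of a proof trace'' likewise decomposes as ``$\sigma_n$ is a prefix'' plus ``$\sigma_n e_n$ is still a prefix''. Since the $\amod[!]{\cdot}$-hypotheses depend only on the \emph{set} of past atoms, it therefore suffices to prove: whenever $\sigma_n$ is a prefix of a proof trace of $\Delta$ and $a \notin X = \mkset{\sigma_n}$, the sequence $\sigma_n a$ is again a prefix of a proof trace iff $\enc[\urgent]{\Delta}, \amod[!]{X} \vdash \amod[U]{a}$. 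The base case $n=0$ is the instance $X = \emptyset$ of this claim (with $\epsilon$ a prefix by rule \nrule{($\epsilon$)}), and threading the one-step equivalence through the induction yields both directions of the lemma.

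To prove the one-step claim I would analyse how $\amod[U]{a}$ is derived in $\enc[\urgent]{\Delta}, \amod[!]{X}$. As $\enc[\urgent]{\Delta}$ is itself a Horn \pcl\ theory, Lemma~\ref{lem:pcl:nd-horn} restricts attention to elimination rules, and the layered shape of the encoding pins down the possibilities: the $\amod[!]{\cdot}$-atoms never occur as conclusions of clauses, so $\amod[!]{x}$ is derivable iff $x \in X$, and the only clauses with head $\amod[U]{a}$ are $\amod[!]{a}\imp\amod[U]{a}$ (excluded here since $a \notin X$), the clauses $\amod[!]{\alpha}\imp\amod[U]{a}$ coming from an implication $\alpha \imp a \in \Delta$, and the clauses $\amod[R]{\alpha}\coimp\amod[U]{a}$ coming from a contractual implication $\alpha \coimp a \in \Delta$. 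In the implication case, $\amod[!]{\alpha}\imp\amod[U]{a}$ fires exactly when $\amod[!]{\alpha}$ is derivable, i.e.\ when $\mkset{\alpha} \subseteq X$; this is precisely the side condition of the proof-trace rule \nrule{($\imp$)}, so $\sigma_n a$ is a prefix, and conversely any proof-trace step appending $a$ by \nrule{($\imp$)} supplies such a clause with $\mkset{\alpha} \subseteq X$.

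The contractual case is the main obstacle. Here I would use rule \nrule{($\coimp$E)}: deriving $\amod[U]{a}$ through $\amod[R]{\alpha}\coimp\amod[U]{a}$ reduces to $\enc[\urgent]{\Delta}, \amod[!]{X}, \amod[U]{a} \vdash \amod[R]{\alpha}$, i.e.\ to reachability of every atom of $\alpha$ once $a$ is assumed ``on credit''. The bridge is that assuming $\amod[U]{a}$ makes $\amod[R]{a}$ derivable (via the $\Omega$-clause $\amod[U]{a}\imp\amod[R]{a}$), so that on the $\amod[R]{\cdot}$-layer $\enc[\urgent]{\Delta}, \amod[!]{X}, \amod[U]{a}$ behaves like $\enc[\urgent]{\Delta, X, a}$; Lemma~\ref{lem:pcl:proof-reach}, applied to $\Delta, X, a$, then turns ``$\amod[R]{\alpha}$ derivable'' into ``every atom of $\alpha$ occurs in some trace of $\sem{\Delta, X, a}$''. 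This is exactly the premise $\tau \in \sem{\Delta, a}$ with $\mkset{\alpha} \subseteq \mkset{\tau}$ of the proof-trace rule \nrule{($\coimp$)}, whose interleavings with $a$ lie in $\sem{\Delta}$ and let me insert $a$ right after $\sigma_n$; conversely, any trace introducing $a$ via \nrule{($\coimp$)} yields the required reachability. The two points I expect to cost effort are (i) a small monotonicity argument that $\amod[!]{X}$ in the conclusion-layer and the free atoms of $\Delta, X$ induce the same set of derivable $\amod[R]{\cdot}$-atoms, despite $\amod[!]{X}$ additionally being able to fire implication clauses $\amod[!]{\alpha}\imp\amod[U]{a}$; and (ii) matching the interleaving semantics of \nrule{($\coimp$)} --- where $a$ may be placed before its cause $\alpha$ is established --- with the circular use of the hypothesis $\amod[U]{a}$ in \nrule{($\coimp$E)}.
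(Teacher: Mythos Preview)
The paper defers this proof to the extended technical report, so there is no in-paper argument to compare against directly. Your high-level strategy---induction on $|\sigma|$ reducing to a one-step extension claim, then a case analysis on the clauses of $\enc[\urgent]{\Delta}$ that can produce $\amod[U]{a}$---is the natural one and almost certainly what the authors do; the use of Lemma~\ref{lem:pcl:nd-horn} to restrict to elimination rules and of Lemma~\ref{lem:pcl:proof-reach} to connect the $\amod[R]{\cdot}$-layer with proof traces is exactly right.

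There is one place where the sketch is thinner than it needs to be. In the contractual case of the $(\Leftarrow)$ direction of the one-step claim, Lemma~\ref{lem:pcl:proof-reach} applied to $\Delta, X, a$ only gives you, for each atom $b$ of $\alpha$, \emph{some} trace $\tau_b \in \sem{\Delta, X, a}$ containing $b$. But to fire rule \nrule{($\coimp$)} and land back in $\sem{\Delta}$ you need a \emph{single} $\tau \in \sem{\Delta, a}$---not $\sem{\Delta, X, a}$---that both contains all of $\mkset{\alpha}$ and has $\sigma_n$ as a prefix; only then does some interleaving in $\tau \mid a$ extend $\sigma_n a$. Closing this gap needs two small facts you have not stated: (a) a confluence property of Horn proof traces (any finite collection of atoms each occurring in some trace of $\sem{\Delta'}$ occur together in one trace of $\sem{\Delta'}$), and (b) the observation that since $\sigma_n$ is already a prefix of a trace in $\sem{\Delta}$, every atom of $X$ is reachable in $\Delta, a$, so moving from $\sem{\Delta, X, a}$ down to $\sem{\Delta, a}$ loses nothing at the $\amod[R]{\cdot}$-layer. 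Your concern~(i) gestures at (b) but not at (a); both are routine in the conflict-free Horn setting, but they should be made explicit for the argument to go through.
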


The main result about the endomorphism $\enc[\urgent]{\,}$ follows.
Given a Horn \pcl\ theory $\Delta$,
an atom $\atom{a}$ is urgent in $\Delta$ iff
$\amod[U]{\atom{a}}$ is provable in $\enc[\urgent]{\Delta}$.

\begin{theorem} \label{th:pcl:enc-u}
For all Horn \pcl\ theories $\Delta$, 
and for all $a \not\in X \subseteq \atoms$:
\[
  \atom{a} \in \curgent[\Delta]{X}
  \;\; \iff \;\;
  \enc[\urgent]{\Delta}, \amod[!]{X} \vdash \amod[U]{\atom{a}} 
\]
\end{theorem}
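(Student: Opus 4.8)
The plan is to prove the equivalence by routing it through the augmented theory $\Delta, X$ (where, as usual, the set $X$ is read as the clauses $\setcomp{\top \imp e}{e \in X}$), so that Lemma~\ref{lem:pcl:proof-urgent} applies directly, and then to transfer the resulting derivability back from $\enc[\urgent]{\Delta, X}$ to $\enc[\urgent]{\Delta}, \amod[!]{X}$. Concretely, I would first establish
\[
  a \in \curgent[\Delta]{X}
  \;\;\iff\;\;
  \enc[\urgent]{\Delta, X},\, \amod[!]{X} \vdash \amod[U]{\atom{a}},
\]
and then the purely logical fact
\[
  \enc[\urgent]{\Delta, X},\, \amod[!]{X} \vdash \amod[U]{\atom{a}}
  \;\;\iff\;\;
  \enc[\urgent]{\Delta},\, \amod[!]{X} \vdash \amod[U]{\atom{a}}.
\]

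For the first equivalence, the point is that in $\Delta, X$ every $e \in X$ comes with the clause $\top \imp e$, whose image under $\enc[\urgent]{\cdot}$ contains $\top \imp \amod[U]{e}$ and $\top \imp \amod[R]{e}$. Hence every permutation $\sigma$ of $X$ is a proof trace of $\Delta, X$ (append the elements of $X$ one by one via rule \nrule{($\imp$)} with antecedent $\top$), and every step of $\sigma$ is trivially urgent because the corresponding $\amod[U]{e}$ is provable outright in $\enc[\urgent]{\Delta, X}$. Fixing such a $\sigma$ and applying Lemma~\ref{lem:pcl:proof-urgent} to $\Delta, X$ and the sequence $\sigma\,\atom{a}$, the premises indexed by the prefix $\sigma$ hold automatically, so the lemma reduces to: $\sigma\,\atom{a}$ extends to a proof trace of $\Delta, X$ iff $\enc[\urgent]{\Delta, X}, \amod[!]{\mkset{\sigma}} \vdash \amod[U]{\atom{a}}$, i.e.\ iff $\enc[\urgent]{\Delta, X}, \amod[!]{X} \vdash \amod[U]{\atom{a}}$. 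As this right-hand side is independent of $\sigma$ and (for finite $X$, the only case in which the left-hand side can be non-empty) a permutation always exists, existentially quantifying over $\sigma$ as in Def.~\ref{def:pcl:urgent} yields precisely $a \in \curgent[\Delta]{X}$.

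For the second equivalence, write $\enc[\urgent]{\Delta, X} = \enc[\urgent]{\Delta} \cup \Theta$, where $\Theta$ consists of the clauses $\top \imp \amod[U]{e}$, $\top \imp \amod[R]{e}$, $\amod[!]{e} \imp \amod[U]{e}$ and $\amod[U]{e} \imp \amod[R]{e}$ for $e \in X$. The direction $(\Leftarrow)$ is immediate by monotonicity, since $\enc[\urgent]{\Delta} \subseteq \enc[\urgent]{\Delta, X}$. For $(\Rightarrow)$ I would argue that, with $\amod[!]{X}$ in the context, the clauses of $\Theta$ are dispensable. For $e \in X \cap \primes{\Delta}$ the clauses $\amod[!]{e} \imp \amod[U]{e}$ and $\amod[U]{e} \imp \amod[R]{e}$ already belong to $\enc[\urgent]{\Delta}$ (they are produced by its $\Omega$-component), so from $\amod[!]{e} \in \amod[!]{X}$ one already derives $\amod[U]{e}$ and $\amod[R]{e}$, making the two unconditional clauses redundant. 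For $e \in X \setminus \primes{\Delta}$ the atoms $\amod[U]{e}, \amod[R]{e}$ are fresh and occur in no clause of $\enc[\urgent]{\Delta}$ whose head ranges over $\primes{\Delta}$, so they form an isolated fragment that cannot feed a derivation of $\amod[U]{\atom{a}}$ (the only non-trivial case being $a \in \primes{\Delta}$, since otherwise $\amod[U]{\atom{a}}$ is underivable on both sides). Using Lemma~\ref{lem:pcl:nd-horn} to restrict to the rules \nrule{(Id)}, \nrule{($\land$I)}, \nrule{($\land$E1)}, \nrule{($\land$E2)}, \nrule{($\imp$E)}, \nrule{($\coimp$E)}, a structural induction on the proof then eliminates every appeal to a $\Theta$-clause.

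I expect the main obstacle to be exactly this last elimination: making rigorous that the fragment generated by the junk atoms $\amod[U]{e}, \amod[R]{e}$ ($e \in X \setminus \primes{\Delta}$) stays disconnected from $\amod[U]{\atom{a}}$, in particular across the contractual-elimination rule \nrule{($\coimp$E)}, whose premise temporarily enriches the context with the consequent and must be shown not to open a spurious path through these atoms. Carefully tracking the occurrences of $\amod[!]{}$-, $\amod[U]{}$- and $\amod[R]{}$-atoms through the admissible Horn proofs is the delicate bookkeeping; everything else reduces to Lemma~\ref{lem:pcl:proof-urgent} and elementary monotonicity.
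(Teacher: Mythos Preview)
Your approach is essentially the same as the paper's: both route through $\enc[\urgent]{\Delta, X}$ via Lemma~\ref{lem:pcl:proof-urgent} and then transfer derivability between $\enc[\urgent]{\Delta, X},\amod[!]{X}$ and $\enc[\urgent]{\Delta},\amod[!]{X}$, using monotonicity in one direction and redundancy of the extra clauses in the other. You are in fact more careful than the paper about the corner case $e \in X \setminus \primes{\Delta}$ (where the $\Omega$-clauses for $e$ are absent from $\enc[\urgent]{\Delta}$), which the paper's one-line justification glosses over; your proposed structural induction via Lemma~\ref{lem:pcl:nd-horn} is the right way to make that step rigorous, and the worry about \nrule{($\coimp$E)} is harmless since the only $\coimp$-clauses in $\enc[\urgent]{\Delta}$ have consequents $\amod[U]{a}$ with $a \in \primes{\Delta}$.
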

\begin{proof}
$(\Rightarrow)$
Assume that $\atom{a} \in \curgent[\Delta]{X}$.
By Def.~\ref{def:pcl:urgent}, there exist $\sigma,\sigma'$ such that
$\mkset{\sigma} = X$ and $\sigma a \sigma' \in \sem{\Delta, X}$.
By Lemma~\ref{lem:pcl:proof-urgent}, we have 
\(
  \enc[\urgent]{\Delta, X}, \amod[!]{X} \vdash \amod[U]{a}
\).
The thesis
\(
  \enc[\urgent]{\Delta}, \amod[!]{X} \vdash \amod[U]{a}
\)
follows because $\enc[\urgent]{X} = \amod[!]{\top} \imp \amod[U]{X}$ and
$\amod[!]{X}$ implies $\amod[U]{X}$.

\smallskip\noindent
$(\Leftarrow)$
Assume that
\(
  \enc[\urgent]{\Delta}, \amod[!]{X} \vdash \amod[U]{a}
\).
Since $\enc[\urgent]{\Delta, X} \vdash \amod[U]{e}$ for all $e \in X$.
Take any $\sigma$ such that $\mkset{\sigma} = X$.
By Lemma~\ref{lem:pcl:proof-urgent} it follows that
there exist $\sigma,\eta$ such that $\mkset{\sigma} = X$ and
$\sigma a \eta \in \sem{\Delta, X}$.
By Def.~\ref{def:pcl:urgent}, we conclude that
$\atom{a} \in \curgent[\Delta]{X}$.
\end{proof}

\section{A logical view of contracts} \label{ces-pcl}

In this section we present our main results about the relation
between contracts and \pcl.
For a conflict-free \CES\ $\esname$ and a Horn \pcl\ theory $\Delta$,
we write $\Delta \sim \esname$ whenever there exists
an isomorphism 
which maps an enabling $X \vdash e$ in $\esname$
to a clause $(\bigwedge X) \imp e$ in $\Delta$,
and a circular enabling $X \Vdash e$
to $(\bigwedge X) \coimp e$.
Theorem~\ref{th:ces-pcl:agreement} shows that, 
for a relevant class of payoff functions, 
we can characterise agreement in terms of provability in \pcl.
Theorem~\ref{th:ces-pcl:prudence-prooftrace} states that 
proof traces correspond to sequences of prudent events.
Finally, Theorem~\ref{th:ces-pcl:winning-strategy} relates
winning strategies with urgent atoms.

Before providing the technical details,
we illustrate the relevance of these results
with the help of a couple of examples.

\begin{example} 
\label{ex:double-diamond}
Consider the following Horn \pcl\ theory $\Delta_{\star}$:
\begin{align*}
  \Delta_{\star} =
  \{
  &
  (e_0 \land e_1) \coimp e_6,\;
  e_6 \imp e_3,\;
  e_6 \imp e_4,\;
  e_3 \imp e_0, \\
  & 
  (e_4 \land e_5) \coimp e_7,\;
  e_7 \imp e_1,\;
  e_7 \imp e_2,\;
  e_2 \imp e_5
  \}
\end{align*}
It is possible to prove that $\Delta_{\star} \vdash e_i$ for all $i \in 0..7$.
However, this is not straightforward to see, and indeed
were any one of the $\coimp$ in $\Delta_{\star}$ replaced
with a $\imp$, then no atoms would have been provable.

We can exploit the correspondence between provability in \pcl\
and agreement in contracts to obtain a simple proof 
of $\Delta_{\star} \vdash e_i$.
To do that, observe that $\Delta_{\star}$
is isomorphic to the \CES\ $\esname_{\star}$ depicted as:
\begin{center}
\vcenteredhbox{%
\begin{tikzpicture}[scale=1.2]
\draw [fill=black]  (0,1.6)  circle (0.05)  node [left] {$e_0$};
\draw [fill=black]  (1, 1.6)  circle (0.05)  node [left] {$e_1$};
\draw [fill=black]  (2, 1.6)  circle (0.05)  node [left] {$e_2$};
\draw [fill=black]  (0,0.3)  circle (0.05)  node [left] {$e_3$};
\draw [fill=black]  (1,0.3)  circle (0.05)  node [left] {$e_4$};
\draw [fill=black]  (2,0.3)  circle (0.05)  node [left] {$e_5$};
\draw [fill=black]  (0.5, 0.73)  circle (0.05)  node [left] {$e_6$};
\draw [fill=black]  (1.5, 1.19 ) circle (0.05)  node  [left] {$e_7$};
%
\draw [->>] (0.05,1.53)  --(0.5, 1.19) -- (0.5, 0.8);
\draw [-] (0.95,1.53) --(0.5,1.19); 
\draw [->] (0.5,0.7) --(0.05,0.39);  
\draw [->] (0.5,0.7) --(0.95,0.39);  
%
\draw [->>] (1.05,0.39) -- (1.5,0.73)-- (1.5, 1.12 ); 
\draw [-] (1.95,0.39) -- (1.5,0.73); 
\draw [->] (1.5, 1.19) -- (1.05,1.53);
\draw [->] (1.5, 1.19) -- (1.95,1.53);
%
\draw[->,  rounded corners] (-0.05,0.39) -- (-0.5, 0.95) -- (-0.05,1.53);
\draw[->,  rounded corners] (2.05,1.53) -- (2.5, 0.95) -- (2.05,0.39);
\end{tikzpicture}}
\end{center}
and let $\cname = \seq{\esname_{\star},\payoffsym}$,
where we assume a single participant {\pmv A},
whose payoff is 
\(
  \payoff{\pmv A}{} 
  = 
  \setcomp{\sigma}{\forall i \in 0..7.\; e_i \in \mkset{\sigma}}
\).

It is easy to check that the contract $\cname$ admits an agreement.
Indeed, $e_6$ and $e_7$ are prudent in $\epsilon$;
$e_0$ becomes prudent after $e_3$ is fired; $e_5$ after $e_2$;
events $e_3,e_4$ after $e_6$; events $e_1,e_2$ after $e_7$.
Therefore, there exists a winning strategy for {\pmv A} in $\cname$.
Theorem~\ref{th:ces-pcl:agreement} allows for transferring 
this result back to \pcl, by establishing that 
all the atoms $e_0,\ldots,e_7$ are provable in $\Delta_{\star}$

Furthermore, the correspondence between contracts and \pcl\
allows for easily constructing the proof traces of $\Delta_{\star}$ ---
which is not as straightforward by applying Def.~\ref{def:pcl:proof-trace}.
The plays $\sigma$ where {\pmv A} wins are those where 
only the prudent events are performed, i.e.:
\[
  \sigma 
  \;\in\;
  \big( \, e_6 \; (e_4 \mid e_3 e_0) \big) 
  \mid
  \big( \, e_7 \; (e_1 \mid e_2 e_5) \big) 
\]
By Theorem~\ref{th:ces-pcl:prudence-prooftrace},
these plays exactly correspond to
the proof traces of the \pcl\ theory $\Delta_{\star}$.
\end{example}

\begin{example}[Shy dancers]
\label{ex:shy-dancers}
There are $n^2$ guests at a wedding party, arranged in a grid of size $n \times n$.
The music starts, the guests would like to dance, 
but they are too timid to start.
Each guest will dance provided that at least other two guests in its 
8-cells neighborhood will do the same.

We model this scenario as follows.
For all $i,j \in 1..n$, 
${\pmv A}_{i,j}$ is the guest at cell $(i,j)$,
and $e_{i,j}$ is the event which models ${\pmv A}_{i,j}$ dancing.
The neighborhood of $(i,j)$ is
$I_{i,j} = \setcomp{(p,q) \neq (i,j)}{|p-i| \leq 1 \;\land\; |q-j| \leq 1}$,
and we define $E_{i,j} = \setcomp{e_{p,q}}{(p,q) \in I_{i,j}}$.
Let $\mathfrak{F}$ be the set of functions from
$\setenum{1..n} \times \setenum{1..n}$ to $\setenum{\vdash,\Vdash}$.
For all $\bullet \in \mathfrak{F}$, let $\esname^{\bullet}$ be the \CES:
\[
  \esname^{\bullet} \; = \; \textstyle \bigcup_{i,j \in 1..n} \esname_{i,j}^{\bullet}
  \hspace{32pt}
  \text{where } \;
  \esname_{i,j}^{\bullet} \; = \; 
  \setcomp{X \bullet(i,j) \; e_{i,j}}{X \subseteq E_{i,j} \;\land\; |X| = 2}
\]
Intuitively, each function $\bullet \in \mathfrak{F}$ establishes which guests use $\vdash$ and which use $\Vdash$.
For all $\bullet \in \mathfrak{F}$ and for all $i,j \in 1..n$,
guest ${\pmv A}_{i,j}$ promises to dance if at least two neighbours have already started (in case $\bullet(i,j) = \;\vdash$),
or under the guarantee that they will eventually dance 
(when $\bullet(i,j) = \;\Vdash$).

Now, let $\payoff{({\pmv A}_{i,j})} = \setcomp{\sigma}{\mkset{\sigma} \cap E_{i,j} \geq 2}$, for all $i,j \in 1..n$.
For all $\bullet \in \mathfrak{F}$, we ask whether the contract 
$\cname^{\bullet} = \seq{\esname^{\bullet},\payoffsym}$
admits an agreement,
i.e.\ if all guests will eventually dance.
We have that $\cname^{\bullet}$ admits an agreement
iff there exist two guests in the same neighborhood which use $\Vdash$.
Formally:
\begin{equation*} 
  \exists i,j \in 1..n.\;\;
  \exists (p,q), (p',q') \in I_{i,j}.\;\;
  (p,q) \neq (p',q')
  \;\land\;
  \bullet(p,q) = \;\Vdash\; = \bullet(p',q')
\end{equation*}
Indeed, when
the above holds,
the strategy:
\[
  \Sigma^{\bullet}_{i,j}(\sigma) \; = \; 
  \begin{cases}
    \setenum{e_{i,j}} & \text{if $e_{i,j} \not\in \mkset{\sigma}$, and $\bullet(i,j) = \,\Vdash$ or $\mkset{\sigma} \vdash e_{i,j}$} \\
    \emptyset & \text{otherwise}
  \end{cases}
\]
is winning, for all guests ${\pmv A}_{i,j}$.
As noted in the previous example, the correspondence established
by Theorem~\ref{th:ces-pcl:agreement} allows us to transfer the above
observations to \pcl.
In particular, the above provides a simple proof that, 
in the Horn \pcl\ theory:
\[
  \Delta^{\bullet} \; = \; 
  \setcomp
  {\alpha \bullet(i,j) \; e_{i,j}}
  {\mkset{\alpha} \subseteq E_{i,j} \;\land\; |\mkset{\alpha}| \geq 2 \;\land\; i,j \in 1..n}
\]
some atom is provable iff there exist at least two distinct 
clauses which use $\coimp$.
Again, this result would not be easy to prove directly, without exploiting
the correspondence between agreements and provability.
\end{example}

\begin{definition}
We write $\Delta \sim \esname$ when $\esname$ is conflict-free, and
\[
\textstyle
  \Delta 
  \; = \;
  \setcomp{(\bigwedge X) \imp e}{X \vdash e \in \esname} 
  \cup
  \setcomp{(\bigwedge X) \coimp e}{X \Vdash e \in \esname}
\]
\end{definition}

To relate agreement with provability, 
we consider the class of \emph{reachability payoffs},
which neglect the order in which events are performed.
This class is quite broad.
For instance, it includes the \emph{offer-request payoffs}~\cite{BCZ13post}.
Intuitively, these are used by participants which 
want to be paid for each provided service.
Each participant {\pmv A} has 
a set $\setenum{O_{\pmv A}^0 , O_{\pmv A}^1, \ldots}$ of sets of events
(the \emph{offers}),
and a corresponding set $\setenum{R_{\pmv A}^0, R_{\pmv A}^1,\ldots}$ 
(the \emph{requests}).
To be winning, whenever {\pmv A} performs in a play
some offer $O_{\pmv A}^i$ (in whatever order), 
the play must also contain 
the corresponding request $R_{\pmv A}^i$,
and at least one of the requests has to be fulfilled.

\begin{definition}
\label{def:ces-pcl:reachability-payoff}
A \emph{reachability payoff} is a function $\payoffsym: \aname \rightarrow \powset{E^{\infty}}$ 
such that if $\mkset{\sigma} = \mkset{\eta}$ then 
$\sigma \in \payoff{\pmv A}{} \iff \eta \in \payoff{\pmv A}{}$,
for all ${\pmv A} \in \aname$.
\end{definition}

Alternatively, $\payoffsym$ is a reachability payoff
when there exists some predicate $\varphi \subseteq \powset{E}$
such that $\sigma \in \payoff{\pmv A}{}$ iff $\mkset{\sigma} \in \varphi$,
for all ${\pmv A} \in \aname$.

\medskip
The following theorem gives a logical characterisation of agreements.
If $\payoffsym$ is a reachability payoff induced by $\varphi$,
and $\Delta \sim \esname$,
then the contract $\seq{\esname,\payoffsym}$
admits an agreement whenever the set provable atoms in $\Delta$ 
satisfies the predicate $\varphi$.

\begin{restatable}{mythm}{agreement}
\label{th:ces-pcl:agreement}
Let $\Delta \sim \esname$,
and let $\payoffsym$ be a reachability payoff defined by the predicate $\varphi$.
Then, the contract $\cname = \seq{\esname,\payoffsym}$ admits an agreement iff 
$\setcomp{a}{\Delta \vdash a} \in \varphi$.
\end{restatable}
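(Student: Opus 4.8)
The plan is to collapse the whole statement onto a single canonical play. Write $P = \setcomp{a}{\Delta \vdash a}$ for the set of provable atoms. Since $\esname$ is conflict-free, no two prudent events can ever exclude one another, so there is a maximal play $\sigma^{*}$ in which every participant is innocent, unique up to reordering; the backbone of the argument is that $\mkset{\sigma^{*}} = P$. Because $\payoffsym$ is a reachability payoff (Def.~\ref{def:ces-pcl:reachability-payoff}), membership $\sigma^{*} \in \payoff{\pmv A}{}$ depends only on $\mkset{\sigma^{*}}$, so once $\mkset{\sigma^{*}} = P$ is known the theorem reduces to the claim that the contract admits an agreement iff $\sigma^{*}$ is a winning play for every participant, i.e.\ iff $P \in \varphi$.

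The first technical ingredient is a purely logical bridge: $\Delta \vdash a$ iff $a \in \bigcup \mkset{\sem{\Delta}}$, that is, an atom is provable exactly when it occurs in some proof trace. The $(\Leftarrow)$ direction is an induction on the derivation of a trace by the rules of Fig.~\ref{fig:pcl:proof-traces}, reading \nrule{($\imp$)} and \nrule{($\coimp$)} as the eliminations \nrule{($\imp$E)} and \nrule{($\coimp$E)}: in the \nrule{($\coimp$)} case one first obtains $\Delta, a \vdash \alpha$ from the sub-trace in $\sem{\Delta,a}$ and then discharges $a$ by \nrule{($\coimp$E)}. For $(\Rightarrow)$ one linearises a Horn proof --- restricted to the rule set of Lemma~\ref{lem:pcl:nd-horn} --- into a trace, the interleaving in \nrule{($\coimp$)} being exactly what accommodates \nrule{($\coimp$E)}; alternatively this follows from Lemma~\ref{lem:pcl:proof-reach} after checking $\enc[\urgent]{\Delta} \vdash \amod[R]{a} \Leftrightarrow \Delta \vdash a$. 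Composing this bridge with the correspondence between proof traces and innocent plays (Theorem~\ref{th:ces-pcl:prudence-prooftrace}) identifies the event sets: the atoms performable in any all-innocent play are exactly $\bigcup \mkset{\sem{\Delta}} = P$, and conflict-freeness forces every maximal such play to realise all of them, giving $\mkset{\sigma^{*}} = P$.

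For $(\Leftarrow)$, assume $P \in \varphi$ and give each $\pmv A$ the strategy ``do every event of $\pmv A$ that is prudent in the current past''. In any fair play conforming to it, $\pmv A$ performs all her prudent events and is therefore innocent; if some $\pmv B \neq \pmv A$ is culpable, $\pmv A$ wins by the blame clause of Def.~\ref{def:ces:win}; otherwise all participants are innocent, the play realises exactly $P$, so $\mkset{\sigma} = P \in \varphi$ yields $\sigma \in \payoff{\pmv A}{}$, while prudence yields credit-freeness, and again $\pmv A$ wins. Hence every participant agrees. For $(\Rightarrow)$, assume agreement and pick, by fair scheduling, one fair play $\sigma$ conforming simultaneously to the winning strategies of all participants. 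Every participant wins in $\sigma$, and both clauses of Def.~\ref{def:ces:win} entail innocence, so no participant is culpable and each must win through the payoff clause; thus $\sigma \in \payoff{\pmv A}{}$ for all $\pmv A$, and since $\sigma$ is a maximal all-innocent play we have $\mkset{\sigma} = P$, whence this is precisely $P \in \varphi$.

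The delicate points all concern the mutually coinductive pair prudence/innocence (Def.~\ref{def:ces:prudence}). First, one must verify that the pointwise ``do all prudent events'' prescription is a single prudent strategy, not merely a union of per-event ones; this is exactly where conflict-freeness is indispensable, as it guarantees that prudent events never obstruct one another and can be scheduled together in one fair play. Second, the credit-freeness invoked in $(\Leftarrow)$ must be extracted from prudence, which only asserts that at some step the credits of $\pmv A$ drop into $\mkcredit{\epsilon} = \emptyset$; the upgrade to credit-freeness in the sense of Def.~\ref{def:ces:win} rests on the observation that the credit function is non-increasing along a play, since once $\mkset{\sigma_j} \Vdash e$ holds it persists by saturation of $\Vdash$. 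Finally, turning $\mkset{\sigma^{*}} \subseteq P$ into the equality $\mkset{\sigma^{*}} = P$ is the real crux: it requires that every provable atom is genuinely prudent at some reachable state and hence fired in a fair all-innocent play, which is what the proof-trace/prudence correspondence must deliver in the conflict-free setting.
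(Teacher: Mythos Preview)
Your overall plan is sound and tracks the paper's framework: collapse agreement onto a canonical play via the prudence/proof-trace correspondence (Theorem~\ref{th:ces-pcl:prudence-prooftrace}) and the characterisation of prudence in Lemma~\ref{lem:ces-pcl:prudentcf}. Two points, however, need repair.

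First, the sentence ``the credit function is non-increasing along a play'' is false as written: in $\esname_4$ of Fig.~\ref{fig:ces} the credits evolve $\emptyset \to \setenum{a} \to \emptyset$. What your saturation argument actually establishes is the \emph{per-event} statement: once a fixed $e$ leaves $\mkcredit{\sigma_j}$ it never re-enters. That weaker fact is what you need, so the fix is terminological --- but the claim as stated is wrong.

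Second, and more substantively, the $(\Rightarrow)$ direction has a real gap. You pick a fair play $\sigma$ conforming to \emph{arbitrary} winning strategies, correctly deduce that every participant is innocent and credit-free, and then write ``since $\sigma$ is a maximal all-innocent play we have $\mkset{\sigma} = P$''. But your earlier uniqueness claim for maximal all-innocent plays is false: in the \CES\ with enablings $\vdash a$ and $c \vdash b$ only (so $P = \setenum{a}$), the play $\seq{acb}$ is all-innocent and maximal, yet $\mkset{acb} = \setenum{a,b,c} \neq P$. What excludes $\seq{acb}$ is not innocence but credit-freeness ($c$ stays on credit forever). So you must actually \emph{use} the credit-freeness you derived, and argue two inclusions separately: (i) $\mkcredit{\sigma} = \emptyset$ forces $\mkset{\sigma} \subseteq P$ --- essentially that $\preach[\esname]{\emptyset}$ coincides with the provable atoms, which is a lemma in its own right; and (ii) all-innocence forces $\mkset{\sigma} \supseteq P$, which needs the monotonicity of $\cprudent{X}$ in $X$ (once prudent, an event stays prudent until done), a fact you never state. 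Your ``delicate points'' paragraph gestures at the $\supseteq$ direction for the canonical $\sigma^*$, but not for the arbitrary $\sigma$ arising from unknown winning strategies, and it does not touch the $\subseteq$ direction at all.
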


\begin{example} \label{ex:ces-pcl:or-payoffs}
Consider the following offer-request payoff $\payoffsym$ of {\pmv A} and {\pmv B}:
\[
\begin{array}{llll}
  O_{\pmv A}^0 = \setenum{a_0}
  \qquad
  &
  O_{\pmv A}^1 = \setenum{a_0,a_1} 
  \qquad
  &
  O_{\pmv B}^0 = \setenum{b_0} 
  \qquad
  &
  O_{\pmv B}^1 = \setenum{b_2} 
  \\
  R_{\pmv A}^0 = \setenum{b_0,b_2}
  &
  R_{\pmv A}^1 = \setenum{b_1}
  &
  R_{\pmv B}^0 = \setenum{a_0}
  \quad
  &
  R_{\pmv B}^1 = \setenum{a_0,a_2}
\end{array}
\]
and let the obligations of {\pmv A} and {\pmv B} be modelled by the \CES\ $\esname$ with enablings:
\[
  \setenum{b_0,b_2} \Vdash a_0
  \quad
  b_1 \vdash a_1
  \quad
  b_2 \Vdash a_2
  \quad
  a_0 \vdash b_0
  \quad
  \setenum{a_0,a_1} \vdash b_1
  \quad
  \setenum{a_0,a_2} \vdash b_2
\]
In the \pcl\ theory $\Delta \sim \esname$, the set of provable atoms is
$\setenum{a_0,a_2,b_0,b_2}$.
Therefore, by Theorem~\ref{th:ces-pcl:agreement} it follows that
the contract $\cname = \seq{\esname,\payoffsym}$ admits an agreement.
\end{example}

Recall from Def.~\ref{def:ces:agreement} that, 
when a contract admits an agreement,
all participants have a winning strategy.
Two relevant question are then how to construct a winning strategy 
for each participant, and how such strategy is related to \pcl.
We answer these questions in Theorem~\ref{th:ces-pcl:winning-strategy} below,
where we show that
a winning strategy can be obtained by following the order of urgent atoms.

In order to prove Theorem~\ref{th:ces-pcl:winning-strategy} we need 
to establish some further results about strategies and proof traces.
The first result is Lemma~\ref{lem:ces-pcl:prudentcf},
which provides an alternative characterisation of prudent events 
in case of conflict-free contracts.
We denote with $\preach{X}$ the set \emph{reachable events with past $X$}.
Intuitively, if a set $X$ of events has been performed in the past,
we consider an event $e \not\in X$ reachable with past $X$ 
when $e$ occurs in some play $\sigma\eta$
where the prefix $\sigma$ is a linearization of $X$, and 
the overall credits are contained in $X$
(\text{i.e.}, past debits need not be honoured).
Lemma~\ref{lem:ces-pcl:prudentcf} states that an event 
$e$ is prudent for {\pmv A} in~$\sigma$ whenever 
$e \in \cprudent{\mkset{\sigma}}$, 
namely the set of events
which are $\vdash$-enabled by $\mkset{\sigma}$, or 
$\Vdash$-enabled by $\mkset{\sigma} \cup \preach{\mkset{\sigma}}$.

\begin{restatable}{mylem}{prudentcf}
\label{lem:ces-pcl:prudentcf}
For a set $X \subseteq E$, let
\begin{align*}
  \preach{X}
  & = 
  \setcomp{e \not\in X}{\exists \sigma, \eta : \;
    \mkset{\sigma} = X, \; 
    e \in \mkset{\eta},\, \text{and }
    \mkcredit{\sigma\eta } \subseteq X}
  \\
  \cprudent{X} 
  & =
  \setcomp
  {e \not\in X\!}
  {\! X \vdash e \text{ or }
   X \cup \preach{X} \Vdash e}
\end{align*}
Then, $e$ is prudent in $\sigma$ iff $e \in \cprudent{\mkset{\sigma}}$.
\end{restatable}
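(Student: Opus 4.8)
The plan is to prove the two implications separately, throughout exploiting that $\esname$ is conflict-free: plays are just repetition-free sequences, every finite set of events admits a linearisation into a play, and whether an event sits on credit depends only on its past and on the set of events of the whole play. Write $X = \mkset{\sigma}$ and $R = \preach{X}$. Before either direction I would establish a closure property of $R$ that packages the existential witnesses in its definition into a single canonical play: the set $X \cup R$ is itself reachable from $X$ with overall credit contained in $X$, i.e.\ there is one play $\zeta^{*}$ whose past linearises $X$, with $\mkset{\zeta^{*}} = X \cup R$ and $\mkcredit{\zeta^{*}} \subseteq X$. This follows by taking, for each $e' \in R$, its witness play and interleaving the post-$X$ fragments: in the conflict-free setting concatenation creates no conflicts, and saturation of $\vdash$ and $\Vdash$ guarantees that every newly added event stays either $\vdash$-justified by its (now larger) past or $\Vdash$-justified by the (now larger) whole play, so no new credits appear. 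In particular $Y \subseteq X \cup R$ together with $Y \Vdash e$ yields $(X \cup R) \Vdash e$ by saturation. The only prudence fact I need on the easy side is the remark after Def.~\ref{def:ces:prudence} that the empty strategy is prudent.

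For $(\Leftarrow)$, assume $e \in \cprudent{X}$. If $X \vdash e$, the strategy prescribing $e$ at $\sigma$ and nothing further is prudent: $e$ is $\vdash$-justified by its past $X$, so $e \notin \mkcredit{\sigma'}$ in every conforming extension, and {\pmv A} performs no other event, so her credits never leave $\mkcredit{\sigma}$. If instead $(X \cup R) \Vdash e$, fix an enabling $Y \Vdash e$ with $Y \subseteq X \cup R$ and let $\Sigma$ prescribe $e$ followed by {\pmv A}'s events along $\zeta^{*}$. The claim is that $\Sigma$ is prudent, which I would prove by a coinductive argument running in parallel with the innocence of the other participants: in any fair play conforming to $\Sigma$ in which every ${\pmv B} \neq {\pmv A}$ is innocent, fairness plus innocence force all events of $Y$ to be eventually performed (those owned by {\pmv A} via $\Sigma$, those owned by others because they are prudent for them, established simultaneously by the same coinduction), so the whole play $\Vdash$-enables $e$ and the credit of $e$ is discharged, leaving {\pmv A}'s credits within $\mkcredit{\sigma}$.

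For $(\Rightarrow)$, let $\Sigma$ be a prudent strategy with $e \in \Sigma(\sigma)$ and assume $X \not\vdash e$; I must produce $Y \Vdash e$ with $Y \subseteq X \cup R$. Choose the fair extension $\sigma'$ of $\sigma$ that fires $e$ immediately and thereafter lets every participant perform exactly the events of $\zeta^{*}$, so that all ${\pmv B} \neq {\pmv A}$ are innocent and the total credit of $\sigma'$ stays within $X \cup \{e\}$. Prudence of $\Sigma$ gives some $k$ with $\mkcredit{\sigma'_{k}} \cap \invprinc{\pmv A} \subseteq \mkcredit{\sigma}$; since $e \in \invprinc{\pmv A}$ and $e \notin \mkcredit{\sigma}$, this forces $e \notin \mkcredit{\sigma'_{k}}$, and because the past of $e$ is exactly $X$ with $X \not\vdash e$, the only possibility is $\mkset{\sigma'_{k}} \Vdash e$. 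It remains to check $\mkset{\sigma'_{k}} \subseteq X \cup R$: by construction the events of $\sigma'_{k}$ outside $X$ are those of $\zeta^{*}$, each of which lies in $R$ by the definition of $\preach{\cdot}$; hence $(X \cup R) \Vdash e$ by saturation, i.e.\ $e \in \cprudent{X}$.

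The step I expect to be the real obstacle is the prudence claim in $(\Leftarrow)$ and, symmetrically, the credit-boundedness of $\sigma'$ in $(\Rightarrow)$: both require bridging the \emph{per-participant} credit bound built into the definition of prudence, together with the innocence of the others, against the \emph{global} credit bound built into $\preach{X}$. The clean way to close this gap is the canonical play $\zeta^{*}$ together with a single mutual coinduction showing that the profile realising $\zeta^{*}$ simultaneously keeps every participant innocent and every participant's credits confined to $X$; everything else is bookkeeping with saturation of $\vdash,\Vdash$ and the conflict-freeness of $\esname$.
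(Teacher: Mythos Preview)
The paper defers the proof of this lemma to the long version, so there is no in-paper argument to compare against directly.  Evaluating your plan on its own merits: the structural lemma about the canonical play $\zeta^{*}$ is correct and is indeed the right backbone (it gives $\mkcredit{\zeta^{*}}\subseteq X$ and, as a consequence, $\preach{X\cup R}=\emptyset$ and $\cprudent{X\cup R}=\emptyset$).  The $(\Leftarrow)$ case $X\vdash e$ is fine.  But both remaining cases have a genuine gap that your final paragraph names but does not close.

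In $(\Rightarrow)$ your chosen witness play $\sigma'$ cannot do double duty.  To invoke prudence of $\Sigma$ you need $\sigma'$ to \emph{conform} to $\Sigma$, which constrains {\pmv A}'s moves to those prescribed by $\Sigma$; but $\Sigma$ is an \emph{arbitrary} prudent strategy, so there is no reason {\pmv A}'s moves in $\zeta^{*}$ lie in $\Sigma$, nor that $\Sigma$ prescribes only events of $R$.  If instead you let {\pmv A} follow $\Sigma$ and only the other participants follow $\zeta^{*}$, conformance holds but now $\mkset{\sigma'_{k}}$ may contain {\pmv A}-events outside $X\cup R$, and then $\mkset{\sigma'_{k}}\Vdash e$ no longer yields $(X\cup R)\Vdash e$ by saturation.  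Separately, the claim ``all $B\neq A$ are innocent in $\sigma'$'' already presupposes that events outside $\cprudent{\cdot}$ are imprudent, i.e.\ the very implication you are proving.

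The same circularity bites in $(\Leftarrow)$, Case~2.  Saying ``fairness plus innocence force all events of $Y$ to be performed'' needs that those events are (and remain) prudent for their owners; that is again the statement under proof.  A hand-wave to ``a single mutual coinduction'' is not enough: you must exhibit a concrete post-fixed point of the prudence/innocence operator and argue it contains $\cprudent{}$, and for $(\Rightarrow)$ you need a separate argument that the greatest fixed point does not exceed $\cprudent{}$ (post-fixed points only give the inclusion in one direction).  One workable route is to take $P=\{(\tau,e'):e'\in\cprudent{\mkset{\tau}}\}$, prove $P$ is a post-fixed point using $\zeta^{*}$ (this yields $(\Leftarrow)$), and then for $(\Rightarrow)$ run the adversarial play in which every $B$ plays the strategy $\Sigma_{B}(\tau)=\cprudent{\mkset{\tau}}\cap\invprinc{B}$: this profile is prudent by the part just proved, so every $B$ is innocent, and one checks inductively that all events fired after $\sigma$ lie in $R$ because $\cprudent{\mkset{\tau}}\subseteq\preach{\mkset{\tau}}$ and $\preach{}$ is monotone in the right way.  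That last monotonicity/closure step is where the real work is, and it is missing from your outline.
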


The criterion given by Lemma~\ref{lem:ces-pcl:prudentcf}
is much simpler than
the mutually coinductive definition of prudence in Def.~\ref{def:ces:prudence}.
Indeed, a polynomial-time algorithm for computing $\cprudent{X}$
can be easily devised as follows.
At step 0, we compute the reflexive transitive closure $X_1$ 
of the hypergraph of the \CES\ $\esname$ (neglecting the $\Vdash$-enablings),
taking as start nodes all the events in $X \cup Y_0$, where
$Y_0 = \setcomp{e}{\exists Z : Z \Vdash e \in \esname}$
contains the events at the right of some $\Vdash$ in $\esname$.
The transitive closure can be computed 
in polynomial time in the number of events in~$\esname$.
If $X_1 \Vdash Y_0$, then $X_1 = \preach{X}$.
Otherwise, we repeat the above procedure with start nodes $X \cup Y_1$,
where $Y_1 = \setcomp{e \in Y_0}{X_1 \Vdash e}$,
until reaching a fixed point.
In the worst case, we do $n$ steps, hence we have a 
polynomial algorithm for computing $\preach{X}$.
After this, $\cprudent{X}$ can be easily computed,
as in Lemma~\ref{lem:ces-pcl:prudentcf}.

The following lemma provides a link between contracts and \pcl,
by establishing that prudent events in a \CES\ $\esname$
correspond exactly to urgent atoms in a \pcl\ theory $\Delta \sim \esname$.
The idea of the proof is to exploit
the mapping $\enc[\urgent]{\,}$ in Def.~\ref{def:pcl:enc-u}
as a bridge between \CES\ and \pcl.
To do that, we first map $\esname$ to a \pcl\ theory $\enc[\urgent]{\esname}$,
and we relate the prudent events in $\esname$ to the provable atoms 
in $\enc[\urgent]{\esname}$.
Since $\Delta \sim \esname$, we have that
$\enc[\urgent]{\esname} = \enc[\urgent]{\Delta}$,
and so by Theorem~\ref{th:pcl:enc-u} we can relate
provability in $\enc[\urgent]{\Delta}$ with
urgent atoms in $\Delta$.
Summing up, the prudent events in $\esname$ are the urgent atoms in $\Delta$.

\begin{restatable}{mylem}{prudenturgent}
\label{lem:ces-pcl:prudent-urgent}
Let $\Delta \sim \esname$. 
Then, for all $X \subseteq E$, 
\(
  \cprudent[\esname]{X} = \curgent[\Delta]{X}
\).
\end{restatable}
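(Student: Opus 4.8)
The plan is to use the endomorphism $\enc[\urgent]{\cdot}$ of Def.~\ref{def:pcl:enc-u} as a bridge and to reduce the statement to the characterisation of provability already provided by Theorem~\ref{th:pcl:enc-u}. Since $\Delta \sim \esname$, the isomorphism sending each enabling $X \vdash e$ to the clause $(\bigwedge X) \imp e$ and each circular enabling $X \Vdash e$ to $(\bigwedge X) \coimp e$ makes the two encodings coincide clause by clause, so that $\enc[\urgent]{\esname} = \enc[\urgent]{\Delta}$ by direct inspection of Def.~\ref{def:pcl:enc-u}. Theorem~\ref{th:pcl:enc-u} gives $a \in \curgent[\Delta]{X} \iff \enc[\urgent]{\Delta}, \amod[!]{X} \vdash \amod[U]{a}$, so the whole statement reduces to the single bridge
\[
  e \in \cprudent[\esname]{X}
  \;\;\iff\;\;
  \enc[\urgent]{\esname}, \amod[!]{X} \vdash \amod[U]{e},
\]
where for the left-hand side I would use the explicit characterisation of Lemma~\ref{lem:ces-pcl:prudentcf}, namely that $e \in \cprudent[\esname]{X}$ iff $X \vdash e$ or $X \cup \preach{X} \Vdash e$.

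The core of the argument is a reachability correspondence, relativised to an arbitrary past $X$, that parallels Lemma~\ref{lem:pcl:proof-reach}:
\[
  \enc[\urgent]{\esname}, \amod[!]{X} \vdash \amod[R]{e}
  \;\;\iff\;\;
  e \in X \cup \preach{X}.
\]
I would prove this by matching two monotone fixed points. On the logical side, the $\amod[R]{\cdot}$ atoms are generated from $\amod[!]{X}$ through the clauses $\amod[R]{\alpha} \imp \amod[R]{a}$ (coming from $X' \vdash e$) and through the detour $\amod[R]{\alpha} \coimp \amod[U]{a}$ followed by $\amod[U]{a} \imp \amod[R]{a}$ (coming from $X' \Vdash e$). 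On the contract side, $\preach{X}$ is precisely the set of events reachable from a linearisation of $X$ while keeping $\mkcredit{\sigma\eta} \subseteq X$, i.e.\ the fixed point computed by the transitive-closure procedure described after Lemma~\ref{lem:ces-pcl:prudentcf}. The two closures agree because the contractual clause $\amod[R]{\alpha} \coimp \amod[U]{a}$ is exactly what lets an event be added ``on credit'' and its credit be discharged circularly, which is the liberty granted on the contract side by the condition $\mkcredit{\sigma\eta} \subseteq X$. I would make this precise with Lemma~\ref{lem:pcl:proof-urgent}, using its $(\Leftarrow)$ direction to turn reachable atoms into the sequences witnessing membership in $\preach{X}$, and its $(\Rightarrow)$ direction to go back.

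With the reachability correspondence in hand, the bridge follows by inspecting the clauses whose head is $\amod[U]{e}$: up to the unusable $\amod[!]{e} \imp \amod[U]{e}$ (we have $e \notin X$), these are a clause $\amod[!]{X'} \imp \amod[U]{e}$ from some $X' \vdash e$, and a clause $\amod[R]{X'} \coimp \amod[U]{e}$ from some $X' \Vdash e$. The first fires iff $\amod[!]{X'}$ is available, i.e.\ iff $X' \subseteq X$, equivalently $X \vdash e$ by saturation. The second fires, by rule \nrule{($\coimp$E)}, iff $\enc[\urgent]{\esname}, \amod[!]{X}, \amod[U]{e} \vdash \amod[R]{X'}$; the extra hypothesis $\amod[U]{e}$ makes $\amod[R]{e}$ available, which matches allowing $e$ to be performed on credit, so the reachability correspondence yields $X' \subseteq X \cup \preach{X}$, equivalently $X \cup \preach{X} \Vdash e$ by saturation. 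The disjunction of these two cases is exactly the condition of Lemma~\ref{lem:ces-pcl:prudentcf}, completing the bridge and hence the lemma.

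The step I expect to be the main obstacle is the reachability correspondence, and inside it the treatment of circular credit: I must show that the ``on credit'' power of \nrule{($\coimp$E)} --- assuming the head while proving the body --- matches, neither more nor less, the credit-confinement clause $\mkcredit{\sigma\eta} \subseteq X$. The delicate direction is $(\Leftarrow)$, where an event reachable only through a circular dependency must be shown $\amod[R]{\cdot}$-provable; there I would induct on the number of credited events in the witnessing play, discharging them in the order in which their circular enablings become satisfied, so that each application of \nrule{($\coimp$E)} accounts for exactly one honoured credit.
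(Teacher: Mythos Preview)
Your proposal is correct and follows essentially the same approach the paper sketches just before the lemma: use $\enc[\urgent]{\cdot}$ as a bridge, invoke Theorem~\ref{th:pcl:enc-u} to reduce $\curgent[\Delta]{X}$ to provability of $\amod[U]{e}$ in $\enc[\urgent]{\Delta}=\enc[\urgent]{\esname}$, and then match this with $\cprudent[\esname]{X}$ via Lemma~\ref{lem:ces-pcl:prudentcf}. Your additional articulation of the intermediate reachability correspondence $\enc[\urgent]{\esname},\amod[!]{X}\vdash\amod[R]{e}\iff e\in X\cup\preach{X}$ and the ensuing case analysis on the $\amod[U]{e}$-headed clauses is exactly the content the paper leaves implicit in its one-paragraph description.
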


We can now relate prudence in contracts with proofs in \pcl.
Theorem~\ref{th:ces-pcl:prudence-prooftrace} states that
the plays of prudent events correspond to prefixes of proof traces.

\begin{restatable}{mythm}{prudenceprooftrace}
 \label{th:ces-pcl:prudence-prooftrace}
Say $\sigma = \seq{e_0 \, e_1 \cdots}$ is a \emph{prudent play} of $\esname$
when $e_i$ is prudent for~$\sigma_i$ in $\esname$, for all $i$.
If $\Delta \sim \esname$, then 
$\sigma$ is a prudent play of $\esname$
iff
\(
  \exists \eta.\; \sigma \eta \in \sem{\Delta}
\).
\end{restatable}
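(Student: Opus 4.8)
The plan is to prove the biconditional as a single chain of equivalences that carries us from prudence in the contract world, through urgency, to provability in \pcl, reusing the results already established rather than reasoning about prudence from scratch. First I would reduce to the finite case: since $\Delta$ is a finite Horn theory, the isomorphic \CES\ $\esname$ has only finitely many events, and a play contains no repetitions, so every event of a prudent play lies in $\primes{\Delta}$ and the play is finite; symmetrically, $\sem{\Delta}$ consists only of finite sequences, so $\exists\eta.\,\sigma\eta\in\sem{\Delta}$ already forces $\sigma$ finite. Hence there is no loss in writing $\sigma = \seq{e_0\cdots e_n}$.

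The heart of the argument is a \emph{pointwise} equivalence established for each index $i \in 0..n$. Because $\sigma$ has no repetitions we have $e_i \notin \mkset{\sigma_i}$, which is exactly the side condition required by Theorem~\ref{th:pcl:enc-u}. With this in hand I would chain:
\[
  e_i \text{ prudent for } \sigma_i \text{ in } \esname
  \;\iff\;
  e_i \in \cprudent[\esname]{\mkset{\sigma_i}}
  \;\iff\;
  e_i \in \curgent[\Delta]{\mkset{\sigma_i}}
  \;\iff\;
  \enc[\urgent]{\Delta},\,\amod[!]{\mkset{\sigma_i}} \vdash \amod[U]{e_i},
\]
where the first step is Lemma~\ref{lem:ces-pcl:prudentcf} (the simple characterisation of prudence for conflict-free \CES), the second is Lemma~\ref{lem:ces-pcl:prudent-urgent} applied to $\Delta \sim \esname$ with $X = \mkset{\sigma_i}$, and the third is Theorem~\ref{th:pcl:enc-u}.

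It then remains to thread the universal quantifier over $i$ and close with Lemma~\ref{lem:pcl:proof-urgent}. By definition $\sigma$ is a prudent play exactly when $e_i$ is prudent for $\sigma_i$ for \emph{every} $i$, and by the pointwise equivalence this holds iff $\enc[\urgent]{\Delta},\,\amod[!]{\mkset{\sigma_i}} \vdash \amod[U]{e_i}$ for every $i \in 0..n$. Lemma~\ref{lem:pcl:proof-urgent} says precisely that this last condition is equivalent to $\exists\eta.\,\sigma\eta\in\sem{\Delta}$, which is the conclusion. Notice that both directions of the theorem come out at once, since every link in the chain is a genuine biconditional; no separate $(\Rightarrow)$/$(\Leftarrow)$ bookkeeping is needed.

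The main obstacle is not any single deep step — all the substantial content sits in the earlier lemmas — but making sure the plumbing lines up: that $\mkset{\sigma_i}$ is indeed the correct ``past'' set to feed each of Lemmas~\ref{lem:ces-pcl:prudentcf}, \ref{lem:ces-pcl:prudent-urgent} and Theorem~\ref{th:pcl:enc-u}; that the side condition $e_i \notin \mkset{\sigma_i}$ is always available from no-repetition; and that the quantifier $\forall i$ genuinely distributes across the pointwise equivalence \emph{before} Lemma~\ref{lem:pcl:proof-urgent} is invoked (so that one does not, say, apply the lemma prefix-by-prefix and lose coherence of the witness $\eta$). I would also verify the boundary case $\sigma = \epsilon$ explicitly, where the prudence condition holds vacuously and $\epsilon\in\sem{\Delta}$ by rule~\nrule{($\epsilon$)}, and confirm that the finiteness reduction is sound so that Lemma~\ref{lem:pcl:proof-urgent}, stated for a finite $\sigma$, indeed applies.
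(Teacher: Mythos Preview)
Your proposal is correct and follows essentially the same route the paper sets up: the chain Lemma~\ref{lem:ces-pcl:prudentcf} $\to$ Lemma~\ref{lem:ces-pcl:prudent-urgent} $\to$ Theorem~\ref{th:pcl:enc-u} $\to$ Lemma~\ref{lem:pcl:proof-urgent} is exactly the scaffolding the paper builds for this theorem, and your handling of the side conditions (no-repetition gives $e_i\notin\mkset{\sigma_i}$, finiteness of $\Delta$ gives finiteness of $\sigma$) is appropriate.
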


\begin{example}
The prudent plays of the \CES\ $\esname_3$ in Fig.~\ref{fig:ces}
are $\epsilon$, $a$, and $ab$ (see Ex.~\ref{ex:ces:prudence}).
By Theorem~\ref{th:ces-pcl:prudence-prooftrace}, these can be extended
to proof traces in the corresponding Horn \pcl\ theory $\Delta_3 \sim \esname_3$.
Indeed, $ab$ is a proof trace of $\Delta_3$ (see Ex.~\ref{ex:pcl:proof-trace}).
\end{example}

Our last main result relates the winning strategies of a contract 
$\cname = \seq{\esname,\payoffsym}$ with the proof traces of
a \pcl\ theory $\Delta \sim \esname$.
In particular, for all participants~{\pmv A} we construct a strategy that,
in a play $\sigma$, enables exactly the events of {\pmv A} 
which are urgent in $\mkset{\sigma}$.
This strategy is prudent for {\pmv A}, and leads {\pmv A} to a 
winning play whenever {\pmv A} agrees on $\cname$.

\begin{restatable}{mythm}{winningstrategy}
\label{th:ces-pcl:winning-strategy}
Let $\Delta \sim \esname$, and
let the strategy $\Sigma_{\pmv A}$ be defined as:
\[
  \Sigma_{\pmv A}(\sigma) 
  \;\; = \;\;
  \curgent[\Delta]{\mkset{\sigma}} \;\cap\; \invprinc{\pmv A}
\]
Then, $\Sigma_{\pmv A}$ is a prudent strategy for {\pmv A} in  
$\cname = \seq{\esname,\payoffsym}$.
Moreover, if $\payoffsym$ is a reachability payoff
and $\cname$ admits an agreement,
then $\Sigma_{\pmv A}$ is winning for {\pmv A}.
\end{restatable}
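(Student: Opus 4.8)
The plan is to route everything through the bridge between prudent events and urgent atoms, so that the game-theoretic claims become statements about proof traces. First I would rewrite the strategy in contract terms: by Lemma~\ref{lem:ces-pcl:prudent-urgent} we have $\curgent[\Delta]{X} = \cprudent[\esname]{X}$, and by Lemma~\ref{lem:ces-pcl:prudentcf} the events of $\cprudent[\esname]{\mkset{\sigma}}$ are exactly the prudent events in $\sigma$. Hence $\Sigma_{\pmv A}(\sigma)$ enables precisely the prudent events of {\pmv A} in $\sigma$; in particular $\Sigma_{\pmv A}$ is a well-defined strategy, and this reformulation drives both halves of the statement.

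For prudence, fix a past $\sigma$ and a fair play $\sigma'$ extending $\sigma$, conforming to $\Sigma_{\pmv A}$, in which every ${\pmv B} \neq {\pmv A}$ is innocent, and track {\pmv A}'s outstanding credits. Each event $e \in \invprinc{\pmv A}$ fired after $\sigma$ is prudent when performed, so by Lemma~\ref{lem:ces-pcl:prudentcf} it is either $\vdash$-enabled by its past---and then never a credit---or $\Vdash$-enabled by $\mkset{\sigma'_i} \cup \preach{\mkset{\sigma'_i}}$. In the latter case I would show that the reachable events backing the credit are eventually performed: by the proof-trace correspondence of Theorem~\ref{th:ces-pcl:prudence-prooftrace} together with Lemma~\ref{lem:pcl:proof-reach}, the events of $\preach{\mkset{\sigma'_i}}$ occur along proof traces and therefore become urgent in sequence; fairness forces {\pmv A} to fire its own urgent events, while innocence of the remaining participants forces them to fire theirs. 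Once these events appear, saturation of $\Vdash$ yields $\mkset{\sigma'_k} \Vdash e$ at some finite $k$, discharging $e$ permanently. Collecting the resulting indices shows that the credits incurred after $\sigma$ all vanish, so $\mkcredit{\sigma'_k} \cap \invprinc{\pmv A} \subseteq \mkcredit{\sigma}$ for a suitable $k$, which is exactly prudence.

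For the winning part, take any fair play $\sigma'$ conforming to $\Sigma_{\pmv A}$. Since $\Sigma_{\pmv A}$ keeps every prudent event of {\pmv A} enabled until it is fired, fairness guarantees that no event of {\pmv A} stays prudent forever, so {\pmv A} is innocent in $\sigma'$; if some ${\pmv B} \neq {\pmv A}$ is culpable, {\pmv A} wins by the second clause of Def.~\ref{def:ces:win}. Otherwise all participants are innocent, and I must verify the first clause. Credit-freeness of {\pmv A} follows from the discharging argument above applied cofinally, every event of {\pmv A} being eventually and permanently $\vdash$- or $\Vdash$-justified. For the payoff, the crux is to establish $\mkset{\sigma'} = \setcomp{a}{\Delta \vdash a}$: the inclusion $\supseteq$ holds because fairness and innocence force all persistently-urgent (hence provable) atoms to be fired, via Theorem~\ref{th:ces-pcl:prudence-prooftrace}; granting it, agreement and Theorem~\ref{th:ces-pcl:agreement} give $\mkset{\sigma'} = \setcomp{a}{\Delta \vdash a} \in \varphi$, so $\sigma' \in \payoff{\pmv A}{}$ since $\payoffsym$ is a reachability payoff, and {\pmv A} wins. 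I expect the inclusion $\subseteq$ to be the main obstacle, as it is precisely where one must convert the absence of culpable participants into the impossibility of leaving the provable fragment; I would prove it by considering the first non-provable event performed, using the fact that adjoining already-provable atoms as hypotheses does not enlarge the set of provable atoms to conclude that this event is fired imprudently and leaves an undischargeable credit, thereby forcing its author out of innocence and contradicting the all-innocent hypothesis.
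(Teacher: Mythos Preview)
Your route through Lemmas~\ref{lem:ces-pcl:prudent-urgent} and~\ref{lem:ces-pcl:prudentcf}, reducing $\Sigma_{\pmv A}$ to ``fire exactly the prudent events of {\pmv A}'' and then arguing prudence, innocence and credit-freeness of {\pmv A} via the proof-trace correspondence, is the intended one and those parts are sound. The problem is precisely where you flag it: the inclusion $\mkset{\sigma'} \subseteq \setcomp{a}{\Delta \vdash a}$.

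Your proposed argument for that inclusion does not go through. You say that the first non-provable event $e$ ``leaves an undischargeable credit, thereby forcing its author out of innocence'', but this conflates credit-freeness (Def.~\ref{def:ces:win}) with innocence (Def.~\ref{def:ces:prudence}). A participant {\pmv B} is culpable only when some event of {\pmv B} is \emph{persistently prudent yet never performed}; firing an extra, imprudent event has no effect on that condition. Indeed, once $e$ is fired it lies in the past and hence outside $\cprudent{\mkset{\sigma'_j}}$ for all later $j$, so $e$ is trivially ``eventually imprudent'', and nothing prevents {\pmv B} from also firing every one of its prudent events and remaining innocent. What your argument actually establishes is that {\pmv B} is not credit-free---but the first clause of Def.~\ref{def:ces:win} demands credit-freeness only of {\pmv A}, so {\pmv B}'s outstanding credit is irrelevant to whether {\pmv A} wins. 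Thus the all-innocent hypothesis is not contradicted, plays in which opponents gratuitously fire non-provable events are not ruled out, and you cannot conclude $\mkset{\sigma'} \in \varphi$ from Theorem~\ref{th:ces-pcl:agreement} alone. Closing this gap requires exploiting the agreement hypothesis more directly than merely invoking the equivalence of Theorem~\ref{th:ces-pcl:agreement}.
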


\section{Conclusions} \label{sect:conclusions}

We have studied the relations between two foundational models for contracts.
The main result is that the notions of agreement and winning strategy
in the game-theoretic model of~\cite{BCZ13post} have been related,
respectively, to that of provability and proof traces 
in the logical model of~\cite{BZ10lics} 
(Theorems~\ref{th:ces-pcl:agreement} and~\ref{th:ces-pcl:winning-strategy}).

Some preliminary work on relating event structures 
with the logic \pcl\ has been reported in~\cite{BCPZ12places}.
The model of~\cite{BCPZ12places} does not exploit game-theoretic notions:
payoffs are just sets of events,
and agreement is defined as the existence of a configuration
in the \CES\ which contains such set.
In this simplified model, it is shown that an event is reachable in
a \CES\ whenever it is provable in the corresponding \pcl\ theory.
Hence, an agreement exists whenever all the events in the participant payoffs are provable.
Theorem~\ref{th:ces-pcl:agreement} extends this result to a more general 
(game-theoretic) notion of agreement and of payoff.

In~\cite{BCP13fsen} the idea of performing events ``on credit'' has been
explored in the domain of Petri nets.
In the variant of Petri nets presented in~\cite{BCP13fsen} 
(called Lending Petri nets, LPNs), 
certain places may be tagged as ``lending'', with the meaning that
their marking can become negative, but must be eventually brought back to 
a nonnegative value.
A technique is presented to transform Horn \pcl\ theories into LPNs,
and it is shown that provability in a \pcl\ theory 
corresponds to \emph{weak termination} in the LPN obtained by the transformation.

An encoding of Horn \pcl\ formulae into a variant of CCS has been presented in~\cite{BTZ12sacs}.
Very roughly, the encoding maps a clause $\alpha \imp a$ in a process
which inputs all the channels in $\alpha$ and then outputs on $a$, 
while a clause $\alpha \coimp a$
the input of $\alpha$ and the output of $a$ is done in parallel.
The actual encoding is more sophisticated than the above intuition,
mostly because it has to take into account multiple participants
which share the same channels, and it has to 
preserve the notion of culpability defined in the logical model.
In particular, in the CCS model a participant has to be culpable 
only when omitting to produce a promised output,
or omitting to input an available message.

\paragraph{Acknowledgments.}
Work partially supported by 
Aut. Region of Sardinia grants 
L.R.7/2007 CRP-17285 (TRICS),
P.I.A.\ 2010 Project ``Social Glue'',
by MIUR PRIN 2010-11 project ``Security Horizons'',
and by COST Action IC1201: Behavioural Types for Reliable Large-Scale Software Systems (BETTY).

\bibliographystyle{eptcs}
\bibliography{main}

\iftoggle{proofs}{%
  \appendix
  \newpage
  \renewenvironment{theorem}{\begin{appthm}}{\end{appthm}}
  \renewenvironment{lemma}{\begin{applem}}{\end{applem}}
  \renewenvironment{definition}{\begin{appdef}}{\end{appdef}}
  \input{proofs-ces.tex}
  \input{proofs-pcl.tex}
  \input{proofs-ces-pcl.tex}
}{%
}


\end{document}